\documentclass{article}
\usepackage{graphicx,amsmath,amssymb,tikz,amsthm,bm} 
\usepackage{arydshln}
\usepackage{mathtools}
\usepackage[linktocpage=true,
  colorlinks=true, 
  pdfborder={0 0 0},
  linkcolor=blue,
  citecolor=red,
  filecolor=yellow,
  urlcolor=blue,
  bookmarks,
  pdfauthor={},
]{hyperref}
\usetikzlibrary{quantikz}
\usepackage{authblk}

\oddsidemargin 0pt \evensidemargin 0pt \marginparwidth 1in
\marginparsep 0pt \leftmargin 1.25in \topmargin 14pt 
\headheight 14pt \headsep 20pt \topskip 0pt
\textheight 8in \textwidth 6in

\newtheorem{theorem}{Theorem}

\newtheorem{example}{Example}

\title{Approximate real-time evolution operator for potential with\\ one ancillary qubit and application to first-quantized\\ Hamiltonian simulation}

\author[1,2]{Xinchi Huang\footnote{Email: kkou@quemix.com}}
\author[1,2]{Taichi Kosugi}
\author[1,2]{Hirofumi Nishi}
\author[1,2,3,4]{Yu-ichiro Matsushita}
\affil[1]{Department of Physics, The University of Tokyo, Tokyo 113-0033, Japan}
\affil[2]{Quemix Inc., Taiyo Life Nihombashi Building, 2-11-2, Nihombashi Chuo-ku, Tokyo 103-0027, Japan}
\affil[3]{Quantum Materials and Applications Research Center, National Institutes for Quantum Science and Technology (QST), 2-12-1 Ookayama, Meguro-ku, Tokyo 152-8550, Japan}
\affil[4]{Laboratory for Materials and Structures, Institute of Innovative Research, Tokyo Institute of Technology, Yokohama 226-8503, Japan}

\date{}

\begin{document}

\maketitle

\begin{abstract}
In this article, we compare the methods implementing the real-time evolution operator generated by a unitary diagonal matrix where its entries obey a known underlying real function. 
When the size of the unitary diagonal matrix is small, a well-known method based on Walsh operators gives a good and precise implementation. In contrast, as the number of qubits grows, the precise one uses exponentially increasing resources, and we need an efficient implementation based on suitable approximate functions. Using piecewise polynomial approximation of the function, we summarize the methods with different polynomial degrees. 
Moreover, we obtain the overheads of gate count for different methods concerning the error bound and grid parameter (number of qubits). This enables us to analytically find a relatively good method as long as the underlying function, the error bound, and the grid parameter are given. 
This study contributes to the problem of encoding a known function in the phase factor, which plays a crucial role in many quantum algorithms/subroutines. In particular, we apply our methods to implement the real-time evolution operator for the potential part in the first-quantized Hamiltonian simulation and estimate the resources (gate count and ancillary qubits) regarding the error bound, which indicates that the error coming from the approximation of the potential function is not negligible compared to the error from the Trotter-Suzuki formula. 
\end{abstract}
\section{Introduction}

Let $n\in \mathbb{N}$ be a given integer and $N = 2^n$.  
We define a diagonal matrix $V_N\in \mathbb{R}^{N \times N}$ in the following bra-ket notation:
\begin{align*}
V_N = \sum_{j=0}^{N-1} v_j \ket{j}\bra{j}, 
\end{align*}
and aim at finding a quantum circuit for the real-time evolution of this diagonal operator, i.e., $e^{-\mathrm{i}V_N \Delta\tau}$ where $\Delta\tau>0$ is a given small time step. 
Such a quantum circuit plays a fundamental role in the simulation of first-quantized Hamiltonian \cite{Lloyd1996, Abrams.1997, Zalka1998, Wiesner1996, Kassal.2008} as well as subroutines of many quantum circuits/algorithms, e.g., ground state preparation by using the imaginary-time evolution \cite{Gingrich.2004, Terashima.2005, McArdle.2019, Motta.2020, Kosugi.2022}. 
Considering the application to Hamiltonian simulation, we assume in this paper that the entries $v_j$, $j=0,1,\ldots, N-1$ take values of some given potential function at spatial uniform grid points.  
The implementation of a given unitary diagonal matrix is no longer an easy task when $n$ becomes large. In the following context, we summarize several methods in previous papers.

\noindent \underline{Walsh operators in sequency order}
Based on the gate set $\{\mathrm{CNOT}, R_z\}$, it is a well-known result that such a diagonal unitary operation on $n$ qubits can be precisely implemented using $2^n$ multifold $z$ rotation gates, corresponding to Walsh operators, whose rotation angles are proportional to the Hadamard transform of the given diagonal components \cite{Schuch2002, Schuch.2003, BM2004}. 
Later \cite{Welch.2014} gave an optimal circuit in CNOT count using sequency order based on Gray code \cite{G53, Beauchamp1984}, and \cite{Zhang.2022} further reduced the circuit depth by commuting and parallelizing quantum gates in a collective manner. 
Moreover, the circuit for the diagonal unitary matrix with reflection symmetry was discussed in \cite{HKNM24}, and another half reduction was achieved due to the symmetry.  
Although such optimal circuits reduce the depth up to a factor 2 or 4 without any ancillary qubit, the total gate count/circuit depth remains the order of $\mathcal{O}(2^n)$, which is still a tough task for large $n$.
As for a phase-sparse diagonal unitary matrix, the authors in \cite{Beer.2016, Welch.2016} showed that such exponential growth could be reduced to $\mathcal{O}(k n^2)$ where $k$ is the number of distinct values in the diagonal. 

\noindent \underline{Polynomial phase gate}
Based on multi-controlled phase gates and quantum comparators, there is a way to approximate the given potential function by piecewise polynomials and directly implement the approximate functions in the phase factor. 
Although this idea has already been proposed in the first decade of this century \cite{Benenti.2008}, it is not intensively discussed to the authors' best knowledge. 
Owing to the polynomial approximation, the total gate count is of polynomial order. Moreover, only one ancillary qubit is needed for storing the results of quantum comparators. 
Recently, \cite{Ollitrault.2020, Kosugi.2023} discussed the application of such polynomial phase gates to first-quantized Hamiltonian simulation, while the detailed gate count under a given precision was not provided. 

\noindent \underline{Arithmetic operations and phase kickback}
An alternative implementation of the piecewise polynomial approximation in the phase factor is to introduce several ancilla registers to store the function/intermediate results, calculate the desired function in an ancilla register, and then kick back the values in the register to the phase factor. 
In \cite{Kassal.2008, Jones.2012}, the authors suggested using arithmetic operations to calculate the approximate polynomial functions in the ancilla register. 
Recently, \cite{Babbush.2018, Haner.2018, Sanders.2020} provided detailed circuits on the loading of polynomial coefficients in ancilla registers by quantum read-only memory (QROM) \cite{Babbush.2018}, and on the parallel implementation of the polynomials by introducing a label/selection register. 
In comparison to the above polynomial phase gate, this method requires more ancillary qubits to conduct the arithmetic operations and to store the intermediate results. 

This paper is mainly devoted to the insight comparison between different methods, including the modified previous ones and a novel one based on a linear interpolated unitary diagonal matrix. The first contribution is the asymptotic resource estimation concerning the grid parameter and the desired precision (the error bound). 
Taking the limitation of qubit number for the early quantum computers into account, then we discuss the different methods with at most one ancillary qubit in detail and provide a practical way on how to choose an efficient quantum circuit for a given potential function under desired precision. 
Finally, using the above results, the gate count and the ancilla count are evaluated for the problem of first-quantized Hamiltonian simulation for a particle system. 

\section{Methods}
\label{sec:methods}

Recall that $N=2^n$ for a given grid parameter $n\in \mathbb{N}$. For $L>0$, we introduce a uniform division of the interval $[0,L]$ as $\Pi$: 
$0 = x_0 < x_1 < \cdots < x_N = L$, where $x_j = jL/N$, $j=0,1,\ldots,N$. For a given sufficiently smooth function $V$, we aim at approximate circuits to implement the following unitary diagonal matrix: 
$$
e^{-\mathrm{i}V_N\Delta\tau} = \sum_{j=0}^{N-1}e^{-\mathrm{i}v_j\Delta\tau} \ket{j}\bra{j},
$$
where $v_j = V(x_j)$, $j=0,1,\ldots,N-1$. 
As we can put the time step factor $\Delta\tau$ into $V_N$, without loss of generality, here we take $\Delta\tau=1$ for simplicity. 
In the following contexts, we review previous methods and provide a new method for the above purpose. 

\subsection{Sequency-ordered Walsh operator (WAL)}
\label{subsec:method1}

Under the gate set $\{\mathrm{CNOT}, R_z\}$, the exact implementation of the unitary diagonal matrices can be done by using sequency-ordered Walsh operators \cite{Schuch2002, Schuch.2003, BM2004, Welch.2014, Zhang.2022}. The optimized circuit in this direction is given by \cite{Zhang.2022} (see Appendix \ref{subsec:appA1} for more details). 
The gate count/depth is scaled as $2^n$ with relatively small pre-factor $1$ for the depth and pre-factor $2$ for the gate count, which indicates that such a method is a good choice for small grid parameter $n$ \cite{Sornborger.2012}. 
For a large grid parameter $n$, as suggested by Welch et al. in \cite{Welch.2014}, one can choose a coarse-graining parameter $m_0<n$ and apply the sequency-ordered Walsh operators only to the top $m_0$ qubits to derive an approximate operation of the original unitary operation. 
If we denote the circuit for the unitary diagonal operation on $k$ qubits in \cite{Zhang.2022} by $U_{\text{D}}(V,k)$, then the quantum circuit of WAL is described in Fig.~\ref{met:Fig1}, where $m=m_0$ depends on the potential function and the precision. 
\begin{figure}
\centering
\resizebox{10cm}{!}{
\includegraphics[keepaspectratio]{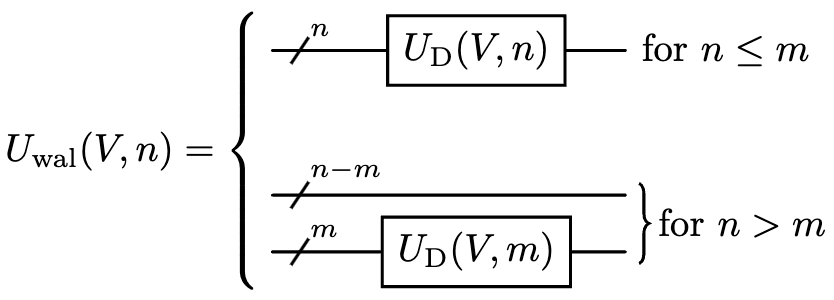}
}
\caption{Quantum circuit of WAL based on sequency-ordered Walsh operators. Throughout this paper, we locate the top (most significant) qubits in the lower part of the circuits. }
\label{met:Fig1}
\end{figure}
For the suitable choice of $m_0$, we refer to Appendix \ref{subsec:appA1}. 
Intuitively, if $n$ is strictly larger than $m_0$, then this method yields a quantum circuit corresponding to a uniform piecewise constant approximation of the given potential function. 

\subsection{Linear interpolated unitary diagonal matrix (LIU)}

We propose a novel method by constructing a linear interpolated unitary diagonal matrix from an original unitary diagonal matrix. Intuitively, the idea is to obtain a circuit for a uniform piecewise linear approximation. In other words, we first calculate the values at $2^{m_1}$ ($m_1<n$) coarse grid points and then apply linear interpolation at fine grid points.  
Let $1\le m_1<n$ be an integer describing the parameter of a coarse division and let $M=2^{m_1}$. We assume the access to a diagonal unitary operation and its fractional binary power:
\begin{align*}
U_M := \sum_{k=0}^{M-1} \exp\left(-\mathrm{i}u_{k}\right) \ket{k}\bra{k}, \quad 
U_M^{(j)} := 
\sum_{k=0}^{M-1} \exp\left(-\mathrm{i}u_{k}/2^j\right) \ket{k}\bra{k}, \ j=0,1,\ldots, n-m_1,
\end{align*}
where we set $u_k = v_{kN/M}$, $k=0,1,\ldots,M-1$. In a general case, the fractional binary power of a given diagonal unitary operation can be derived using the quantum circuit in \cite{Sheridan.2009}. 
By applying an increment gate defined by
\begin{align*}
U_{\text{+1}} \ket{k}_{m_1} := \ket{k+1}_{m_1},\ k=0,1,\ldots,M-1,
\end{align*}
we introduce a phase difference diagonal operation $W_{j}$: 
$$
W_{j} \ket{k}_{m_1} = U_{\text{+1}}^\dag U_M^{(j)} U_{\text{+1}} U_M^{(j)\dag} \ket{k}_{m_1}
= e^{-\mathrm{i}(u_{k+1}-u_{k})/2^{j}} \ket{k}_{m_1}, \ k=0,1,\ldots,M-1,\, j = 0,1,\ldots, n-m_1.
$$
Here, we set $u_{M} := u_0$. 
\begin{figure}
\centering
\resizebox{10cm}{!}{
\includegraphics[keepaspectratio]{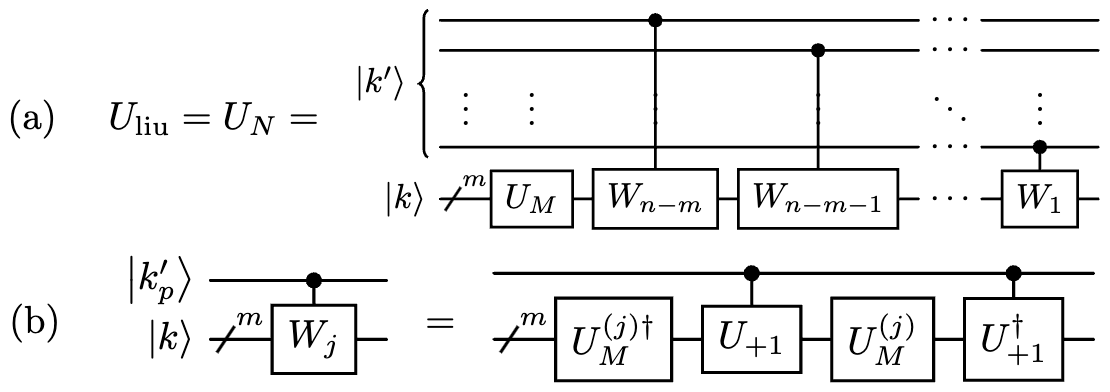}
}
\caption{(a) Proposed quantum circuit of LIU based on a linear interpolated diagonal unitary matrix. The controlled $W_j$ gate with control qubit $\ket{k_p^\prime}$ and target register $\ket{k}_m$ is denoted by $\mathrm{C}W_j^{(p)}$. 
(b) Quantum circuit of $\mathrm{C}W_j^{(p)}$ using two $U_M^{(j)}$, a controlled increment gate $U_{+1}$, and its inverse.}
\label{met:Fig2}
\end{figure}
By denoting $\ket{k^\prime}_{n-m_1} = \ket{k_{n-m_1-1}^\prime}\cdots\ket{k_{0}^\prime}$ and $\ket{j}_n = \ket{k}_{m_1} \otimes \ket{k^\prime}_{n-m_1}$ for $j=0,1,\ldots, N-1$, we employ the controlled $W_j$ gate (see Fig.~\ref{met:Fig2}(b)) and the diagonal unitary operation $U_M^{(j)}$ to obtain 
\begin{align*}
U_N\ket{j}_n &= \mathrm{C}W_1^{(n-m_1-1)} \cdots \mathrm{C}W_{n-m_1-1}^{(1)} \mathrm{C}W_{n-m_1}^{(0)}(U_M\otimes I^{\otimes (n-m_1)}) \left(\ket{k}_{m_1} \otimes \ket{k^\prime}_{n-m_1}\right)\\
&= \mathrm{C}W_1^{(n-m_1-1)} \cdots \mathrm{C}W_{n-m_1-1}^{(1)} \mathrm{C}W_{n-m_1}^{(0)} \left(\mathrm{exp}\left(-\mathrm{i} u_{k}\right)\ket{k}_{m_1} \otimes \ket{k^\prime}_{n-m_1}\right)\\
&= \exp\left(-\mathrm{i}\left(u_{k}+\sum_{p=0}^{n-m_1-1}k_{p}^\prime 2^{p}(u_{k+1}-u_{k})/2^{n-m_1}\right)\right) \left(\ket{k}_{m_1} \otimes \ket{k^\prime}_{n-m_1}\right)\\
&= e^{-\mathrm{i}\tilde v_j} \ket{j}_n.
\end{align*}
Here, we set $\tilde v_j := u_k + k^\prime(u_{k+1}-u_k)/2^{n-m_1}$, which indicates that the circuit in Fig.~\ref{met:Fig2} yields a diagonal unitary operation on $n$ qubits whose diagonal entries are the linear interpolation of the diagonal entries of $U_M$. 
In particular, we find $\tilde v_{\ell N/M} = u_{\ell} = v_{\ell N/M}$, $\ell=0,\ldots,M-1$. Then, $U_N$ can be interpreted as a linear interpolated matrix of the original matrix $U_M$. 
Since we know every entry $u_k$ using the underlying function $V$, we can directly implement the diagonal unitary operation $U_M^{(j)}$ by $U_{\text{D}}(V/2^j,m_1)$. 
The detailed gate implementation and the choice of parameter $m_1$ are provided in Appendix \ref{subsec:appA2}. 

Our proposed method LIU works well if the potential function is periodic or localized such that the difference between the values at two endpoints is negligible. For a general function, we do not necessarily have $u_M = u_0$, which implies that there may be an additional error in the last sub-interval of the division, and we may need extra multi-controlled phase gates to refine the circuit. 
An alternative way is to extend the function from $[0,L]$ to $[0,2L]$ by $V(x):= V(2L-x)$ for $x\in [L,2L]$. Then, we derive a periodic function that can be dealt with. To maintain the desired precision, we need one zero-initialized ancillary qubit. In detail, we apply $U_M^{(j)}$ to the top $m_1$ qubits and the ancillary qubit, and the controlled increment gates are also targeted at these $m_1+1$ qubits. The gate count discussed in Appendix \ref{subsec:appA2} remains the same formula, but one needs to replace $n$ and $m_1$ by $n+1$ and $m_1+1$, respectively. A third way for modification relies on the direct implementation of the controlled $W_j$ operation, given by the controlled unitary diagonal operation, which consists of $2^{m_1}-2$ CNOT gates and $2^{m_1}-1$ controlled phase gates. We will call this modified method mLIU in the following contexts. 
To avoid confusion, we mention a quantum interpolation algorithm for the quantum states, which is useful for preparing ``finer" initial quantum states \cite{Ramos2022, Moosa.2023}. However, such an interpolation circuit does not work for our purpose since it will change the diagonality of the operation. 

\subsection{Phase gate for piecewise-defined polynomial (PPP)}
\label{subsec:met3}

It is known that any function can be approximated by some suitable piecewise polynomial function for which its division and polynomial coefficients can be derived by classical algorithms. 
Then, the problem of finding an efficient approximate circuit for the evolution operator of the potential boils down to the efficient implementation of a given piecewise polynomial function in the phase factor. 
Employing a quantum comparator to distinguish the different sub-intervals of the piecewise polynomial, one solution is to use the polynomial phase gate for a given polynomial function $f$ as follows \cite{Benenti.2008, Kosugi.2023}:
\begin{align*}
U_{\text{ph}}[f] = \sum_{j=0}^{N-1} \mathrm{exp}\left(-\mathrm{i}f(x_j)\right) \ket{j}\bra{j}.
\end{align*} 
A single polynomial phase gate is constructed by several multi-controlled phase gates \cite{Kosugi.2023} for which we provide more details in Appendix \ref{subsec:appA3} for completeness. 

Here, we suggest a modified way to determine the piecewise polynomial approximation to a given function, where we introduce a coarse-graining parameter $m\le n$ and let $M=2^m$. 
We choose $f(\tilde x_{\ell}) = V(\tilde x_{\ell})$ with some given knots $\tilde x_\ell \in \{x_{kN/M}\}_{k=0}^{M-1}$, $\ell=0,1,\ldots,\tilde M$ and $\tilde x_0 = x_0$, $\tilde x_{\tilde M} = x_N$. Since $\tilde M+1$ is the number of knots, $\tilde M\le M$ is the number of polynomials. 
Then, we can derive a piecewise polynomial function $\tilde V$ (using the values at these $\tilde M+1$ knots) by determining the polynomial coefficients with spline interpolation. 
In Appendix \ref{subsec:appA3}, we give an analytic estimation of the parameters $m,\tilde M$ so that $\max_{x\in [0, L]}|\tilde V(x)-V(x)|\le \delta$ for a given precision $\delta>0$, and we also provide practical classical algorithms for finding these parameters numerically with possibly varying degrees for each interval. 
As long as we have prepared the piecewise polynomial approximation: 
\begin{equation}
\label{analy:eq-PPA}
\tilde V(x) = \sum_{\ell=0}^{\tilde M-1} \chi_{[\tilde x_{\ell}, \tilde x_{\ell+1})}(x) f_{\ell}(x), \quad x\in [0,L],
\end{equation}
where $\chi_{A}$ is the characteristic function whose value is $1$ for $x\in A$ and $0$ otherwise, we realize the approximate diagonal operation 
\begin{align*}
e^{-\mathrm{i}\tilde V_N} = \sum_{j=0}^{N-1} e^{-\mathrm{i}\tilde V(x_j)}\ket{j}\bra{j},
\end{align*}
by employing phase gates for polynomial functions and quantum comparators as shown in Fig.~\ref{met:Fig3}. 
A quantum comparator is usually regarded as a circuit to compare the integers encoded in two quantum registers \cite{BKS.2018, Li.2020}. In this article, to minimize the number of qubits, we suggest applying the quantum comparator for a given integer in \cite{Yuan.2023}: $\text{COMP($k$)}\ket{0}\otimes \ket{j} := \ket{j< k} \otimes \ket{j}$, which requires only one ancillary qubit for storing the comparison result. 
\begin{figure}
\centering
\resizebox{15cm}{!}{
\includegraphics[keepaspectratio]{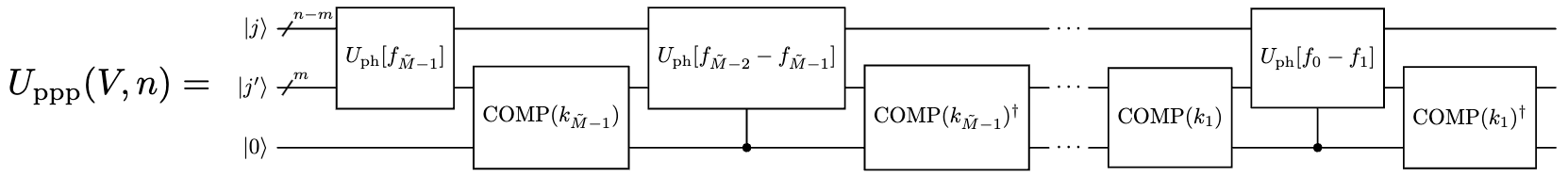}
}
\caption{Modified quantum circuit of PPP to implement a real-time evolution operator of an approximate potential by a piecewise polynomial function. Here, one can use the quantum comparator with a given integer $k_\ell = M\tilde x_\ell/L$ in \cite{Yuan.2023}. The difference from \cite{Ollitrault.2020, Kosugi.2023} is that we apply the comparator only on the most significant $m$ qubits of the register, and thus, we reduce the number of gates for the comparators if $m<n$. } 
\label{met:Fig3}
\end{figure}
We discuss the detailed gate count and depth in Appendix \ref{subsec:appA3}, and here we mention that the above circuit uses one polynomial phase gate, $\tilde M-1$ controlled polynomial phase gates, and $2(\tilde M-1)$ quantum comparators, including the part of uncomputation so that the ancillary qubit can be reused safely. 

\subsection{Arithmetic operations and phase kickback (APK)}
\label{subsec:met4}

In the above circuits for WAL, LIU, and PPP, we directly implement the approximation $e^{-\mathrm{i}\tilde V(x_j)}$ in the phase factor:
$$
\ket{j} \mapsto e^{-\mathrm{i}\tilde V(x_j)}\ket{j}, \quad j=0,1,\ldots, N-1,
$$
using the polynomial phase gates. 
On the other hand, there is another way \cite{Kassal.2008, Babbush.2018, Haner.2018, Sanders.2020} to achieve it by introducing some auxiliary registers and using arithmetic operations. 
Let the piecewise polynomial approximation $\tilde V$ be given by Eq.~\eqref{analy:eq-PPA}. Then, the procedure is as follows:
\begin{enumerate}
\item Introduce and initialize a label/selection register using the system register \cite[Figure 1]{Haner.2018}:
$$
\ket{j}\ket{0} \mapsto \ket{j}\ket{\ell(j)}, 
$$
where $\ell(j)$ denotes the index of the interval for each point $x_j$. 

\item Introduce a register for data and input the piecewise polynomial coefficients $d_\ell$ into the register by quantum read-only memory (QROM) \cite{Babbush.2018, Sanders.2020}: 
$$
\ket{j}\ket{\ell(j)}\ket{0} \mapsto \ket{j}\ket{\ell(j)}\ket{d_{\ell(j)}}. 
$$

\item Introduce a register for the temporary result and calculate the approximate polynomial function for each interval by applying some arithmetic operations (multiplications and additions) conditioned on the system register and the data register \cite{Haner.2018, Sanders.2020}: 
$$
\ket{j}\ket{\ell(j)}\ket{d_{\ell(j)}}\ket{0} \mapsto \ket{j}\ket{\ell(j)}\ket{d_{\ell(j)}}\ket{f_{\ell(j)}(x_j)}
= \ket{j}\ket{\ell(j)}\ket{d_{\ell(j)}}\ket{\tilde V(x_j)}. 
$$

\item Employ the phase kickback circuit \cite{Jones.2012}, which substitutes phase (rotation) gates by an addition on the result register and a suitably prepared phase kickback register: 
\begin{align*}
\ket{j}\ket{\ell(j)}\ket{d_{\ell(j)}}\ket{\tilde V(x_j)} \ket{0}
&\mapsto \ket{j}\ket{\ell(j)}\ket{d_{\ell(j)}}\ket{\tilde V(x_j)}\ket{\gamma(\xi)}\\
&\mapsto e^{-\mathrm{i}\xi \tilde V(x_j)}\ket{j}\ket{\ell(j)}\ket{d_{\ell(j)}}\ket{\tilde V(x_j)}\ket{\gamma(\xi)}. 
\end{align*}
Here, we take $\xi=1$, and we can take, e.g., $\xi=\Delta \tau$ for applications. 

\item Discard the unentangled ancillary qubits and do the uncomputation:
\begin{align*}
e^{-\mathrm{i} \tilde V(x_j)}\ket{j}\ket{\ell(j)}\ket{d_{\ell(j)}}\ket{\tilde V(x_j)}\ket{\gamma(\xi)}
&\mapsto e^{-\mathrm{i} \tilde V(x_j)}\ket{j}\ket{\ell(j)}\ket{d_{\ell(j)}}\ket{\tilde V(x_j)} \\
&\mapsto e^{-\mathrm{i} \tilde V(x_j)}\ket{j}\ket{0}\ket{0}\ket{0}. 
\end{align*}
Here, the phase kickback register can be discarded, while one needs uncomputation for the result, data, and label/selection registers. 

\end{enumerate}
This method is based on arithmetic operations which are usually implemented by CNOT and Toffoli gates. On the other hand, it uses many ancillary qubits, e.g., hundreds of ancillary qubits for even elementary functions \cite{Haner.2018}, which is unsuitable for early quantum computers. 
Although it is not the main concentration of this paper, we collect the possible implementations from previous works in Appendix \ref{subsec:appA4}, and in the next section, we compare this method with the others in their performance on asymptotic gate count regarding grid parameter $n$ and precision $\delta$. 

\section{Gate count and comparison}
\label{sec:comp}

We start by collecting the gate count/circuit depth for the methods in Sect.~\ref{sec:methods}. The results are obtained under the explicit implementation described in Appendix \ref{sec:appA}. 

For WAL, considering the approximate implementation with a coarse-graining parameter $m_0$, one needs $2^{\min\{m_0,n\}}-1$ phase gates and $2^{\min\{m_0,n\}}-2$ CNOT gates with depth $2^{\min\{m_0,n\}}$. In this section, we omit the global phase and do not distinguish phase (rotation) gates from $R_z$ gates due to their equivalence to a global phase. 
According to Appendix \ref{subsec:appA1}, to guarantee the desired precision $\delta>0$, we choose
\begin{equation}
\label{comp:m0}
m_0 := \left\lceil \log_2 \left(L\frac{\|V^{(1)}\|_\infty}{\delta}\right) \right\rceil,
\end{equation}
where $\|V^{(k)}\|_\infty$ denotes the maximum norm of the $k$-th derivative of $V$. 

For LIU, we use quantum Fourier transforms (QFTs) for the increment gate \cite{Yuan.2023} and introduce a parameter $m_1<n$. According to Appendix \ref{subsec:appA2}, we implement the desired unitary diagonal matrix with $4m_1(n-m_1)$ Hadamard gates, $2(2^{m_1}+3m_1^2-1)(n-m_1)+2^{m_1}-1$ phase gates, and $2(2^{m_1}+2m_1^2-2)(n-m_1)+2^{m_1}-2$ CNOT gates up to a global phase. Moreover, the depth under the gate set $\{\mathrm{H}, \mathrm{CNOT}, R_z\}$ is $2(2^{m_1}+16m_1-16)(n-m_1)+2^{m_1}$. On the other hand, since a controlled phase gate can be decomposed into 2 CNOT gates and 3 phase gates, mLIU needs $(3\cdot 2^{m_1}-3)(n-m_1)+2^{m_1}-1$ phase gates and $(3\cdot 2^{m_1}-4)(n-m_1)+2^{m_1}-2$ CNOT gates with depth $2(2^{m_1+1}+16m_1)(n-m_1)+2^{m_1+1}$.
For the desired precision $\delta>0$, we take 
\begin{equation}
\label{comp:m1}
m_1 :=\left\lceil \log_2 \left(L\sqrt{\frac{\|V^{(2)}\|_\infty}{8\delta}}\right) \right\rceil.
\end{equation}

For PPP with a given polynomial degree $p$, we introduce a parameter $m_p$. 
According to Appendix \ref{subsec:appA3}, we need $(8m_1+4)(\tilde M_1-1)$ Hadamard gates, $n+(3n+12m_1^2+4m_1+3)(\tilde M_1-1)$ phase gates, and $(8m_1^2+2n)(\tilde M_1-1)$ CNOT gates with depth $1+(4n+64m_1-32)(\tilde M_1-1)$ for $p=1$, and $(8m_2+4)(\tilde M_2-1)$ Hadamard gates, $n+3n(n-1)/2+(7n(n-1)/2+3n+12m_2^2+4m_2+3)(\tilde M_2-1)$ phase gates, and $n(n-1)+(4n(n-1)+2n+8m_2^2)(\tilde M_2-1)$ CNOT gates with depth 
$\mathcal{O}(\tilde M_2(n^2+m_2))$ for $p=2$. 
To guarantee the desired precision $\delta>0$, we take 
\begin{align*}
m_2 := \left\lceil\log_2 \left(L\left(\frac{2\|V^{(3)}\|_\infty}{81\delta}\right)^{\frac{1}{3}}\right) \right\rceil, \quad
\end{align*}
and 
\begin{align*}
\tilde M_2 \simeq \hat M_{2} := \left\lceil\int_0^L \left(\frac{2|V^{(3)}(x)|}{81\delta}\right)^{\frac{1}{3}} dx\right\rceil, \quad
\tilde M_1 \simeq \hat M_{1} := \left\lceil\int_0^L \left(\frac{|V^{(2)}(x)|}{8\delta}\right)^{\frac{1}{2}} dx\right\rceil.
\end{align*}
Here, $\hat M_{p}$ is the approximation of $\tilde M_p$ up to a pre-factor (see Appendices \ref{subsec:appA3}, \ref{subsec:appB2}), and we choose the constants $C_p$ in Eq.~\eqref{app:err-est} as $C_1=1/8$ and $C_2=2/81$, which correspond to the optimal constants for two-point Hermite interpolations \cite{Agarwal.1991}. By noting that $j$-controlled phase gate can be implemented by $\mathcal{O}(j^2)$ CNOT gates and single-qubit gates without any ancillary qubit 
in linear depth $\mathcal{O}(j)$ \cite{SP2022}, we need $\mathcal{O}((p^2n^p+m_p^2)\tilde M_p)$ CNOT gates and single-qubit gates with depth $\mathcal{O}((pn^p+m_p)\tilde M_p)$ for general $p\ge 3$. 
On the other hand, we have a comment on the special case that $\|V^{(p+1)}\|_\infty=0$ for some $p\in \mathbb{N}$, which means $V$ itself is a polynomial of degree smaller than $p$. According to the error estimates for spline interpolations, $m_p$ can be arbitrarily chosen (e.g., $m_p=0$), and we have $\tilde M_p=1$. 

According to Appendix \ref{subsec:appA4}, APK uses $\mathcal{O}\left((n+p)\delta^{-1/(p+1)}+p(p+\log (1/\delta))^2\right)$ Toffoli gates, $\mathcal{O}\left((n+p)\delta^{-1/(p+1)}+p(p+\log (1/\delta))^2\right)$ Clifford gates, and $\mathcal{O}\left((\log (1/\delta))^2\right)$ phase gates with 
$\mathcal{O}(1+p(p+\log (1/\delta)))$ initialized ancillary qubits.

\subsection{Asymptotic gate count}
\label{subsec:comp-1}

Although we cannot directly compare the methods since their implementations are based on different gate sets, we can derive their asymptotic dependence on the parameters, which also provides information on the features of these methods. The asymptotic gate count and ancilla count concerning polynomial degree $p$, precision $\delta$, and large grid parameter $n$ (so that gate count of WAL is $\mathcal{O}(\min\{2^n, \delta^{-1}\}=\mathcal{O}(\delta^{-1}))$) are given in Table \ref{tab:comp1} using the above estimations of the gate count. 
\begin{table}[htb]
\centering
\caption{Asymptotic gate count and ancilla count for different methods with respect to grid parameter $n$, polynomial degree $p$, and precision $\delta$. }
\label{tab:comp1}
\scalebox{0.8}[0.8]{
\begin{tabular}{l|cccc}
\hline
& WAL & LIU/mLIU & PPP ($p\ge 1$) & APK ($p\ge 1$) \\
\hline
Asymptotic gate count & $\mathcal{O}(\delta^{-1})$ & $\mathcal{O}(\delta^{-1/2}n)$ & $\mathcal{O}(\delta^{-1/(p+1)}(n^p+(\log 1/\delta)^2))$ & $\mathcal{O}((n+p)\delta^{-1/(p+1)}+p(p+\log 1/\delta)^2)$ \\
Ancilla count & 0 & 0/1 & 1 & $\mathcal{O}(1+p(p+\log 1/\delta))$ \\
\hline
\end{tabular}
}
\end{table}

For large fixed $n$, it is clear that PPP and APK both provide the best order of $\mathcal{\tilde O}(\delta^{-1/(p+1)}) = \mathcal{O}(\delta^{-1/(p+1)}\mathrm{poly}\log \delta^{-1})$ for $p\ge 2$. 
In particular, APK performs better than PPP in high-precision simulations at the cost of using $\mathcal{O}(\log 1/\delta)$ ancillary qubits since the constant before $\delta^{-1/(p+1)}$ is independent of $\delta$, and the dependence on $n$ is linear in APK while it is polynomial in PPP. 
This observation indicates the possible superiority of APK among the mentioned methods in the problem of high-precision simulations.  
However, as the near-term quantum devices do not afford such a large amount of logical qubits (hundreds of qubits due to \cite{Haner.2018}), we focus on WAL, LIU, and PPP, which employ at most one ancillary qubit and leave further discussion on APK to future work. 

Next, we compare WAL, LIU, and PPP in detail and list the asymptotic gate count and depth as $n\to \infty$ and $\delta\to 0$ in Table \ref{tab:comp2}.
\begin{table}[htb]
\centering
\caption{Required resources for different methods with respect to grid parameter $n$ and precision $\delta$. }
\label{tab:comp2}
\scalebox{0.9}[0.9]{
\begin{tabular}{l|cccc}
\hline
& WAL & LIU/mLIU & PPP ($p=1$) & PPP ($p\ge 2$) \\
\hline
CNOT count & $\mathcal{O}(\min\{2^n,\delta^{-1}\})$ & $\mathcal{O}(\delta^{-1/2}n)$ & $\mathcal{O}(\delta^{-1/2}(n+(\log 1/\delta)^2))$ & $\mathcal{O}(\delta^{-1/(p+1)}(n^p+(\log 1/\delta)^2))$ \\
$R_z$ count & $\mathcal{O}(\min\{2^n,\delta^{-1}\})$ & $\mathcal{O}(\delta^{-1/2}n)$ & $\mathcal{O}(\delta^{-1/2}(n+(\log 1/\delta)^2))$ & $\mathcal{O}(\delta^{-1/(p+1)}(n^p+(\log 1/\delta)^2))$ \\
Depth & $\mathcal{O}(\min\{2^n,\delta^{-1}\})$ & $\mathcal{O}(\delta^{-1/2}n)$ & $\mathcal{O}(\delta^{-1/2}(n+\log 1/\delta))$ & $\mathcal{O}(\delta^{-1/(p+1)}(n^p+\log 1/\delta))$ \\
Ancilla count & 0 & 0/1 & 1 & 1 \\
\hline
\end{tabular}
}
\end{table}
From Table \ref{tab:comp2}, we find three facts. 
First, focusing on the first two rows, the $R_z$ count has the same order as the CNOT count, which implies that it is sufficient to investigate the CNOT count when we discuss the asymptotic gate count. 
Second, focusing on the first and third rows for PPP, the depth has a slightly weaker dependence $\delta^{-1/(p+1)}\log 1/\delta$ than $\delta^{-1/(p+1)}(\log 1/\delta)^2$ for the CNOT count, but both the depth and the CNOT count have the order $\mathcal{\tilde O}(\delta^{-1/(p+1)})$ if we do not care the less donimant factor. 
Third, focusing on the second and third columns, LIU/mLIU has less asymptotic gate count/depth than PPP with $p=1$ concerning $\delta$. 
If $n$ is independent of $\delta$, then PPP with large $p$ provides less gate count for sufficiently small $\delta$. However, it is not rare that the grid parameter $n$ depends on the desired precision $\delta$ considering the discretization error. In the following context, we tell how to choose the method regarding different $n,\delta$ for given functions. 

\subsection{Case study for selected functions}
\label{subsec:3-2}

The asymptotic gate count may not make sense because the grid parameter $n$ and the desired precision $\delta$ are chosen according to the practical applications. Here, we compare WAL, LIU, and PPP for two given functions with varying parameters $n$, $\delta$. 

\begin{example}
\label{exa:1}
$V(x) = A/\sqrt{a^2+(x-L/2)^2}$, \quad $x\in [0,L]$. 
\end{example}
The first case is the modified Coulomb potential peaked at the center of the interval with parameters $A$ and $a$ describing its amplitude and ``singularity" near the peak, respectively. 
As an illustration, we choose $A=1$, $L=20$, and plot the CNOT count of the implementations by WAL, LIU, and PPP for a couple of precision $\delta$ and parameter $a$. 
For PPP, we apply Algorithm 1 in Appendix \ref{subsec:appA5} based on the analytic error of spline methods. 
PPPs with different polynomial degrees $p=1,2,3$ are discussed, and we apply two-point Hermite interpolations \cite{Agarwal.1991} (the best ones with known optimal constants) for both $p=2$ and $p=3$ in the step of preparing piecewise polynomials by classical computation. The comparison results for different $\delta$ and $a^2=0.5, 0.1, 0.001$ are demonstrated in Fig.~\ref{comp:fig1}. 
\begin{figure}
\centering
\resizebox{15cm}{!}{
\includegraphics[keepaspectratio]{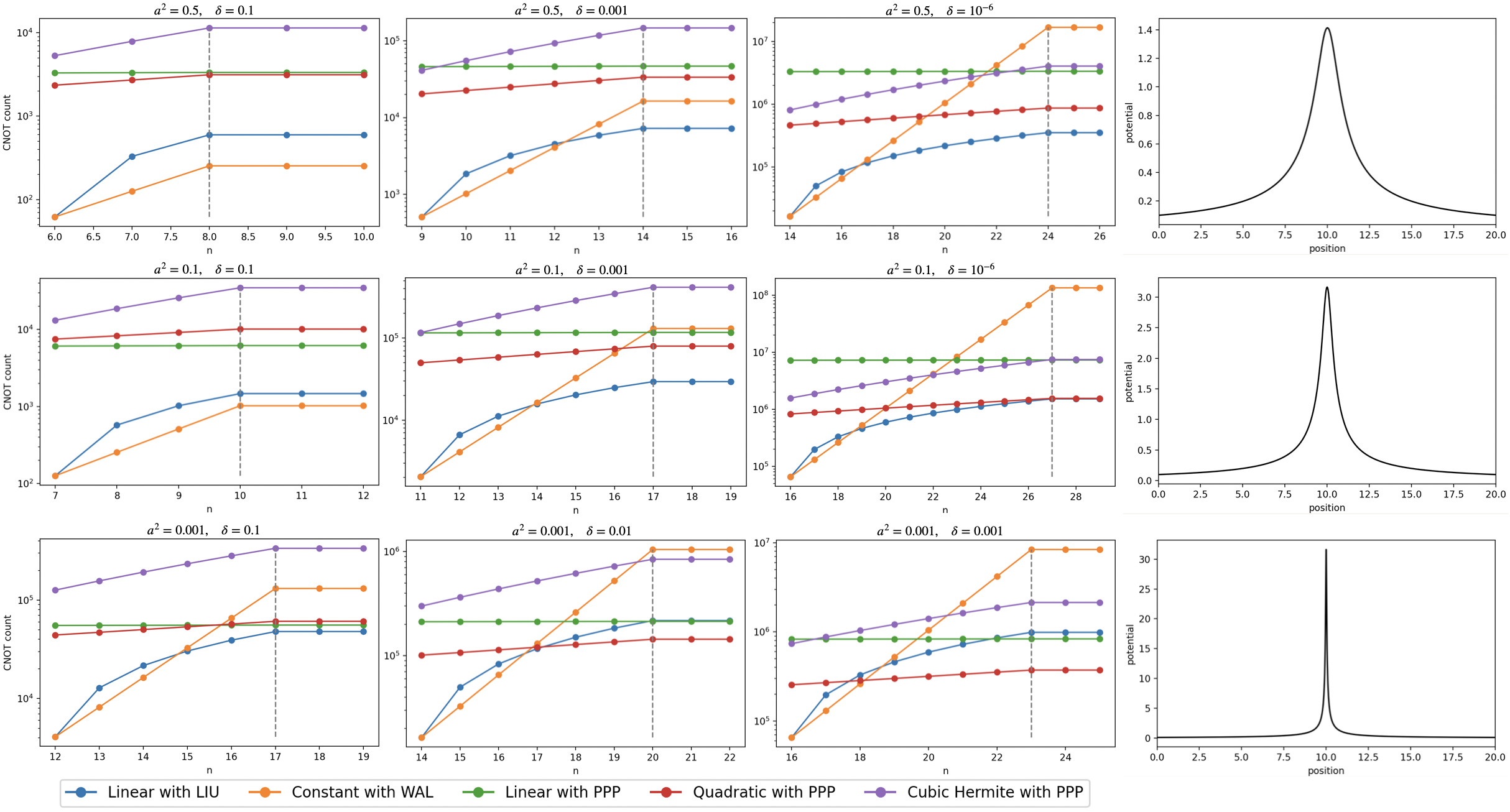}
}
\caption{CNOT count vs. grid parameter $n$ for modified Coulomb potential with several precision $\delta$ and parameter $a^2=0.5,0.1,0.001$ by different methods. 
The vertical dotted line denotes the parameter $m_0$ defined by Eq.~\eqref{comp:m0}, which is the minimal division parameter $m$ such that difference of $V$ at any two adjacent knots of the $2^{m}$ uniform division is smaller than $\delta$. The choices of $n$ are different for each subplot, and we choose $n$ in $\{m_1,m_1+1,\ldots, m_0+2\}$, where $m_0,m_1$ are defined by Eqs.~\eqref{comp:m0}, \eqref{comp:m1}. The potential functions with different $a^2$ are given on the right side of each row for reference. }
\label{comp:fig1}
\end{figure}
We note that for $n=m_0$, we have
$$
\max_{j=0,1,\ldots,N-1}|V(x_{j+1})-V(x_{j})|= \max_{j=0,1,\ldots,N-1}\left|\int_{x_j}^{x_{j+1}}V^{(1)}(x)dx\right|
\le L/2^n \|V^{(1)}\|_\infty = L/2^{m_0} \|V^{(1)}\|_\infty \le \delta.
$$
Then, for $n>m_0$, we only need to apply the approximate circuit to the top $m_0$ qubits to guarantee the desired precision $\delta$. Hence, it is sufficient to consider the range of $n\le m_0$. According to Fig.~\ref{comp:fig1}, we have the following observation:

(i) 
We find WAL/LIU is the best one for small $n\le m_1$ (LIU is reduced to WAL for $n\le m_1$ by its construction). Moreover, for sufficiently large $m_0$, LIU shows better performance than WAL after some $m^\ast\le m_0$. 

(ii) Linear interpolation with LIU is usually better than that with PPP ($p=1$), but for the severe ``singularity" (large derivatives) case, PPP with $p=1$ can perform better than LIU. 

(iii) Although the asymptotic gate count with respect to $\delta$ shows the superiority for large polynomial degree $p\ge 2$, it is not correct for fixed $\delta$ in practice due to the factor $n^p$ that is exponentially increasing in $p$. 
Here, we find PPP with $p=3$ does not outperform either PPP with $p=2$ or LIU in any subplot. Moreover,  PPP with $p=2$ (more precisely, with quadratic Hermite spline) exhibits the best CNOT count for severe ``singularity" and high-precision case when $n$ is sufficiently large (see the subplots of row 3 and columns 2 and 3). 

\begin{example}
\label{exa:2}
$V(x) = Ae^{-ax^2} \cos{\omega x} $, \quad $x\in [0,L]$. 
\end{example}

The second case is a damped oscillating function with parameters $A$, $a$, and $\omega$, where $A$ is the scaling factor of amplitude, $a$ describes the decreasing rate, and $\omega$ denotes the frequency of the sinusoidal function. 
Similarly, we choose $A=1$, $L=20$, $\omega=2$ and plot the CNOT count of the implementations by WAL, LIU, and PPP for a couple of precision $\delta$ and parameter $a$. 
The comparison results for different $\delta$ and $a=1, 0.1, 0.001$ are demonstrated in Fig.~\ref{comp:fig2}. 
\begin{figure}
\centering
\resizebox{15cm}{!}{
\includegraphics[keepaspectratio]{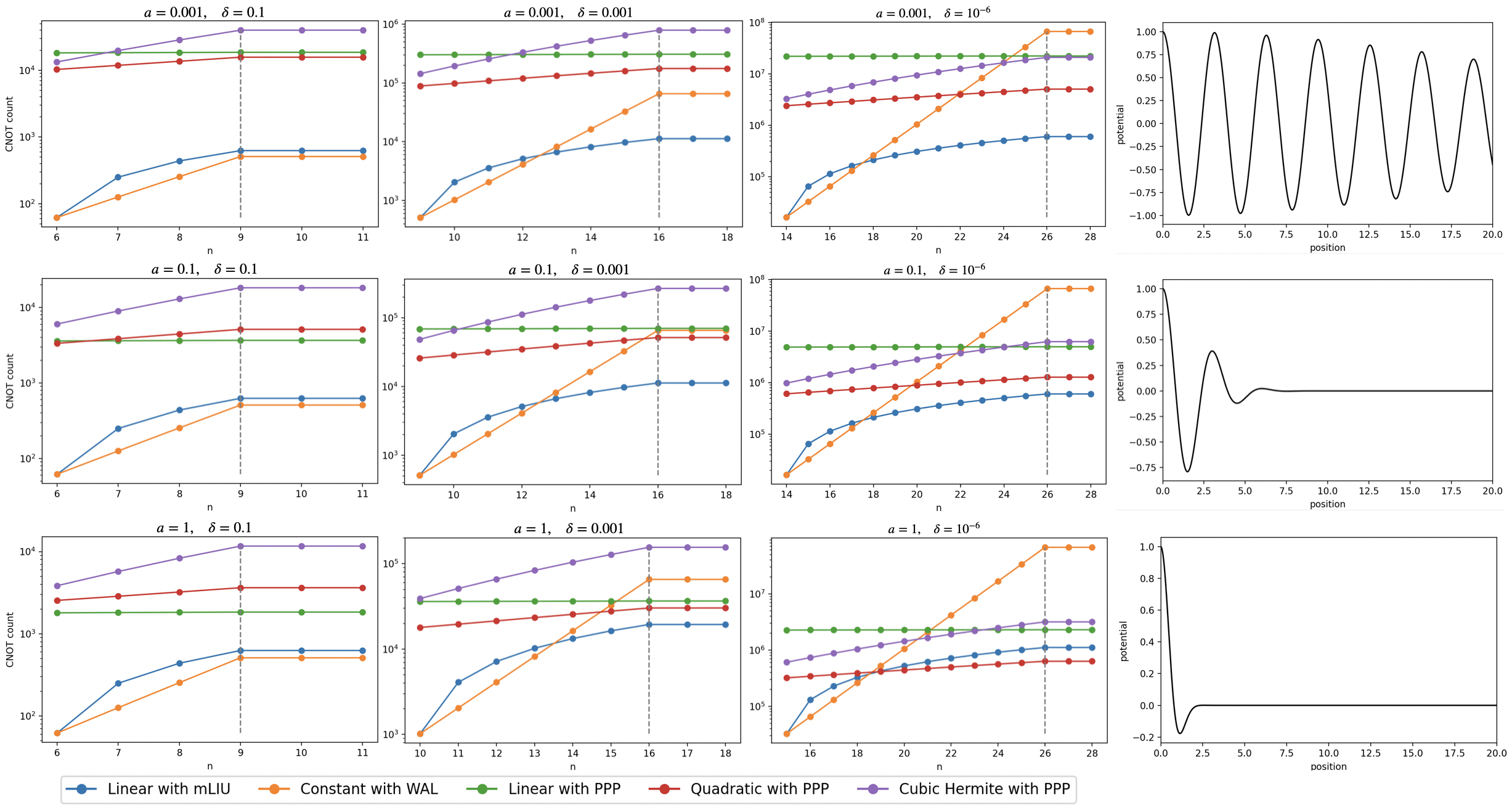}
}
\caption{CNOT count vs. grid parameter $n$ with the damped oscillating function with several precision $\delta$ and parameter $a=1,0.1,0.001$ by different methods. 
The vertical dotted line denotes the parameter $m_0$ defined by Eq.~\eqref{comp:m0}, which is the minimal division parameter $m$ such that difference of $V$ at any two adjacent knots of the $2^{m}$ uniform division is smaller than $\delta$. The choices of $n$ are different for each subplot, and we choose $n$ in $\{m_1,m_1+1,\ldots, m_0+2\}$, where $m_0,m_1$ are defined by Eqs.~\eqref{comp:m0}, \eqref{comp:m1}. The potential functions with different $a$ are given on the right side of each row for reference. }
\label{comp:fig2}
\end{figure}
Again, we provide a plot of CNOT count with respect to $n=m_1,m_1+1,\ldots, m_0+2$ for several choices of $a$ and $\delta$. The CNOT count in the case of $n\ge m_0$ is regarded as constant for each subplot since it is enough to apply the approximate circuit to the top $m_0$ qubits. 
Here, we use mLIU for the reason that $V$ is not periodic. According to Fig.~\ref{comp:fig2}, we have a similar observation that regardless of the oscillation of the function, WAL/mLIU is suitable for the case of low precision and small derivatives, and PPP $(p=2)$ gives the least CNOT count for high-precision and large derivatives case. 

We explain the interesting observation that PPP with $p=2$ outperforms PPP with $p=3$ in practical cases. 
For high-precision case, since we consider $n=\{m_1,m_1+1,\ldots, m_0\}$, which implies $n=\mathcal{O}(\log_2 \delta^{-1})$. Then, the dominant gate count for PPP with $p=2$ is $(\log_2 \delta^{-1})^2\delta^{-1/3}$, while it is $(\log_2 \delta^{-1})^3\delta^{-1/4}$ for PPP with $p=3$. 
By noting that $\delta^{-1/12}$ is smaller than $\log_2\delta^{-1}$ for $10^{-22}\le \delta\le 0.1$, we can verify that PPP with $p=2$ outperforms PPP with a higher degree for most practical cases, e.g., $\delta=10^{-3}$. More precisely, the smallest $\delta$ such that PPP with $p=2$ outperforms other PPPs depends on the given function, and we postpone a detailed analysis to Appendix \ref{subsec:appB4}.  
To find the best method in the sense of minimizing CNOT count, we suggest the following steps. First, we calculate the parameters $m_0,m_1,m_2,\tilde M_1$, and $\tilde M_2$ by the function $V$ and precision $\delta$. Next, we plot the CNOT count of WAL, LIU/mLIU (LIU if periodic and mLIU otherwise), and PPP ($p=2$), respectively, in terms of the analytic estimations (see Appendix \ref{sec:appA} or a summary in Appendix \ref{subsec:appB4}). Finally, we pick up the one with the least CNOT count according to the grid parameter $n$. 
Owing to our ``sharp" analytic estimations of the parameters, the above calculations are cheap on a classical computer, and hence, it is possible to choose a suitable method prior to the construction of the quantum circuits.  

In Figs.~\ref{comp:fig1} and \ref{comp:fig2}, we provide a detailed comparison only for the CNOT count. Since the implementations of WAL, LIU, and PPP are based on CNOT gates and phase gates, roughly speaking, the circuit depth is nearly proportional to the CNOT count. Then, we have a similar discussion for the depth (see Appendix \ref{subsec:appB3} for more details).

\section{Application to first-quantized Hamiltonian simulation}

In this section, we apply the methods to Hamiltonian simulation and estimate the required resources regarding several parameters. We mainly focus on CNOT count since it is an important factor in describing the entanglement of the qubits, and it provides us with information on gate count and circuit depth according to our previous discussion. 

As for the problem of first-quantized Hamiltonian simulation, we employ the well-known grid-based method using the so-called centered/shifted quantum Fourier transform $U_{\text{CQFT}}$ to diagonalize the kinetic energy operator $\hat{T}$. By noting that the potential energy operator $\hat{V}$ itself is diagonal in the real-space representation, we combine this with the Trotter-Suzuki formula to obtain an approximation scheme \cite{Kassal.2008, Kosugi.2023, Childs.2022, Chan.2023}. 
For instance, with the first-order Trotter-Suzuki formula, we have the following approximation
$$
e^{-\mathrm{i}\mathcal{H}K\Delta t} = (e^{-\mathrm{i}\mathcal{H}\Delta t})^K \approx (e^{-\mathrm{i}\hat{T}\Delta t} e^{-\mathrm{i}\hat{V}\Delta t})^K = (U_{\text{CQFT}} U_{\text{kin}}(\Delta t)U_{\text{CQFT}}^{\dag} U_{\text{pot}}(\Delta t))^K,
$$
where $U_{\text{kin}}(\Delta t):=U_{\text{CQFT}}^{\dag}e^{-\mathrm{i}\hat{T}\Delta t} U_{\text{CQFT}}$, $U_{\text{pot}}(\Delta t):=e^{-\mathrm{i}\hat{V}\Delta t}$ are two diagonal unitary matrices with a time step parameter $\Delta t$. 

\subsection{General formulation of a molecular system}
\label{subsec:4-1}

We adopt the formulation introduced in \cite{Kosugi.2023} and consider a molecular system consisting of $N_{\text{e}}$ electrons as quantum mechanical particles and $N_{\text{nuc}}$ nuclei as classical point charges fixed at some known positions $\mathbf{R}_v\in \mathbb{R}^d$ ($v=0,\ldots, N_{\text{nuc}}-1$) where $d\in \mathbb{N}$ is the space dimension. 
We assume these two kinds of particles interact with each other under pairwise interactions, which depend only on the distance between two particles. 
In other words, we consider the Hamiltonian given by 
\begin{align*}
\mathcal{H}(\{\bm{R}_v\}_v) &= \hat T + \hat V(\{\bm{R}_v\}_v) = \hat T + \hat V_{\text{ee}} + \hat V_{\text{en}}(\{\bm{R}_v\}_v) + E_{\text{nn}}(\{\bm{R}_v\}_v) + \hat V_{\text{ext}} \\
&= \sum_{\ell=0}^{N_{\text{e}}-1} \frac{\bm{\hat p}_{\ell}^2}{2m_{\text{e}}} + \frac12 \sum_{\ell,\ell^\prime=0, \ell\not=\ell^\prime}^{N_{\text{e}}-1} v_{\text{ee}}(|\bm{\hat r}_\ell-\bm{\hat r}_{\ell^\prime}|) 
+ \sum_{\ell=0}^{N_{\text{e}}-1}\sum_{v=0}^{N_{\text{nuc}}-1} -Z_v v_{\text{en}}(|\bm{\hat r}_\ell-\bm{R}_v|) \\
&\quad + \frac12 \sum_{v,v^\prime=0, v\not=v^\prime}^{N_{\text{nuc}}-1} Z_{v}Z_{v^\prime} v_{\text{nn}}(|\bm{R}_v-\bm{R}_{v^\prime}|) + \sum_{\ell=0}^{N_{\text{e}}-1} v_{\text{ext}}(\bm{\hat r}_\ell).
\end{align*}
Here, $\bm{\hat r}_\ell$ and $\bm{\hat p}_\ell$ denote the position and momentum operators of the $\ell$-th electron, respectively, and $m_{\text{e}}$ is the mass set to $1$. Moreover, $Z_v$ is the charge of the $v$-th nucleus, while we have already set the charge of the electrons to $-1$. Although it is possible to consider the nuclei in quantum registers, in this paper, we consider only the electrons in quantum registers to minimize the number of required qubits. Furthermore, we can combine the parts $\hat V_{\text{en}}$, $E_{\text{nn}}$, and $\hat V_{\text{ext}}$ as a potential
\begin{align*}
\hat V_{\text{e}}(\{\bm{R}_v\}_v) := \sum_{\ell=0}^{N_{\text{e}}-1} v_{\ell}(\bm{\hat r}_\ell;\{\bm{R}_v\}_v),
\end{align*}
where $v_{\ell}$ is the potential felt by the $\ell$-th electron except for the electron-electron interactions. 

We encode the $N_{\text{e}}$-electron wave function in real space using $n$ qubits for each dimension per electron, as usual in the first-quantized/grid-based formalism. We collect $d n N_{\text{e}}$ qubits as the electronic register: $\ket{\bm{k}_0}_{dn} \otimes \cdots \otimes \ket{\bm{k}_{N_{\text{e}}-1}}_{dn}$ where each $\bm{k}_\ell$ describes $d$ integers of $n$-bit corresponding to the position eigenvalue $\sum_{i=1}^d k_{\ell,i}\bm{e}_i \Delta x_i$. Here, $\bm{e}_i$ is the unit vector, $\Delta x_i = L_i/2^n$, $i=1,\ldots,d$, and $L_i$ is the length of the simulation cell in each dimension. 

Under this formulation, the real-time evolution operator with time step $\Delta t$ by the first-/second-order Trotter-Suzuki formula is shown in Fig.~\ref{appl:Fig1}. 
\begin{figure}
\centering
\resizebox{15cm}{!}{
\includegraphics[keepaspectratio]{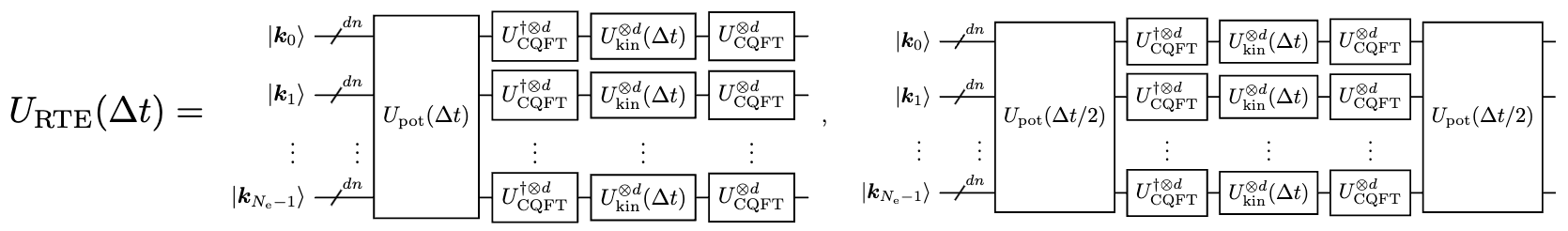}
}
\caption{Implementation of the real-time evolution operator $U_{\text{RTE}}(\Delta t) = e^{-\mathrm{i}\mathcal{H}\Delta t}$ by the Trotter-Suzuki formula. The left circuit is for the first-order Trotter-Suzuki formula, while the right one is for the second-order Trotter-Suzuki formula. 
For each dimension, we apply $U_{\text{CQFT}}$ and $U_{\text{kin}}(\Delta t)$ on the $n$-qubit registers, which means there are $2d N_{\text{e}}$ operations $U_{\text{CQFT}}$/$U_{\text{CQFT}}^\dag$ and $d N_{\text{e}}$ operations $U_{\text{kin}}(\Delta t)$ in total in the above circuits.} 
\label{appl:Fig1}
\end{figure}
Although there are $N_{\text{e}}(N_{\text{e}}-1)$ electron-electron interactions and $N_{\text{e}}$ single electron potentials in $U_{\text{pot}}(\Delta t)$, we find that $U_{\text{pot}}(\Delta t)$ can be implemented in linear depth of $\mathcal{O}(N_{\text{e}})$ by parallelizing the implementations even without ancillary registers (see Appendix \ref{subsec:appC0}).  
In general, implementation of the single electron potential $e^{-\mathrm{i}v_\ell(\bm{\hat r}_\ell; \{\bm{R}_v\}_v)\Delta t}$ needs a phase gate for a $d$-variable function. If $\hat V_{\text{ext}}\equiv 0$, then the electron-electron/electron-nucleus interaction can be implemented using a distance register as Fig.~\ref{appl:Fig2}. 
\begin{figure}
\centering
\resizebox{15cm}{!}{
\includegraphics[keepaspectratio]{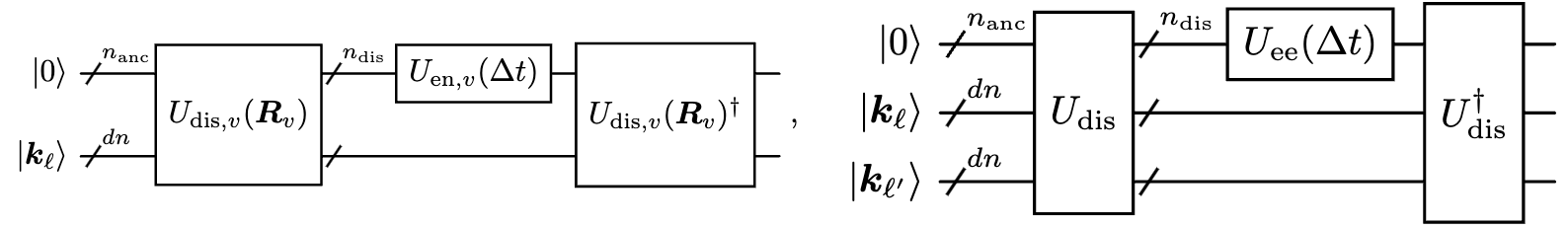}
}
\caption{Implementations of the electron-nucleus interaction $e^{-\mathrm{i}Z_v v_{\text{en}}(|\bm{\hat r}_\ell-\bm{R}_v|)\Delta t}$ for $\ell=0,\ldots,N_{\text{e}}-1$, $v=0,\ldots,N_{\text{nuc}}-1$ and the electron-electron interaction $e^{-\mathrm{i}v_{\text{en}}(|\bm{\hat r}_\ell-\bm{\hat r}_{\ell^\prime}|)\Delta t}$ for $\ell,\ell^\prime=0,\ldots,N_{\text{e}}-1$, $\ell\not=\ell^\prime$. Here, $n_{\text{anc}}$ is the number of ancillary qubits, and $n_{\text{dis}}$ denotes the size of the distance register. Detailed circuits of $U_{\text{dis},v}(\bm{R}_v)$ and $U_{\text{dis}}$ will be given in Appendix \ref{subsec:appC1}.}  
\label{appl:Fig2}
\end{figure}
Here, we put the part of $E_{\text{nn}}$ for the $v$-th nucleus into $U_{\text{en},v}(\Delta t)$, and $U_{\text{en},v}(\Delta t)$ and $U_{\text{ee}}(\Delta t)$ are both diagonal unitary operations that the methods in Sect.~\ref{sec:methods} work. 
As for $U_{\text{dis},v}(\bm{R}_v)$ and $U_{\text{dis}}$, there are usual ways based on arithmetic operations. Besides, we also propose implementations based on QFTs and polynomial phase gates to minimize the number of required qubits. The details are provided in Appendix \ref{subsec:appC1} for completeness. 

\subsection{Resource estimation}

We estimate the required resources for the implementation using the Trotter-Suzuki formula. 
For a given error bound $\varepsilon>0$ and simulation time $t$, we are interested in finding an approximate circuit for $e^{-\mathrm{i}\mathcal{H}t}$ based on the second-order Trotter-Suzuki formula such that 
\begin{align*}
\left\|e^{-\mathrm{i}\mathcal{H}t}-(\tilde U_{\text{pot}}(t/(2K)) U_{\text{CQFT}}^{\otimes d} U_{\text{kin}}^{\otimes d}(t/K)U_{\text{CQFT}}^{\dag\otimes d} \tilde U_{\text{pot}}(t/(2K)))^K\right\| \le \varepsilon.
\end{align*}
In other words, we find $K\in \mathbb{N}$ and some approximate unitary diagonal operation $\tilde U_{\text{pot}}$ to $U_{\text{pot}}$, which depends on the error bound $\varepsilon$. 

As we know, there are three types of errors in the theoretical formulation of the first-quantized Hamiltonian simulation based on the Trotter-Suzuki formula. 

\noindent \underline{Discretization error} 
Roughly speaking, this is an error between the discrete wave function and the continuous wave function subject to the Schr\"odinger equation, which decreases exponentially as the discretization parameter $n$ increases \cite{Childs.2022}. 
Theoretically, one can regard this error as the truncation error of the Fourier series expansion, but this is beyond the interest of this paper. As a result, we assume that the discretization parameter is suitably determined, and we consider only the error between the operation $e^{-\mathrm{i}H_n t}$ and its approximation where $H_n$ is the $2^n \times 2^n$-discretized Hamiltonian matrix of $\mathcal{H}$. That is, we consider
\begin{align}
\label{appl:tareq1}
\left\|e^{-\mathrm{i}H_n t}-(\tilde U_{\text{pot}}(t/(2K)) U_{\text{CQFT}}^{\otimes d} U_{\text{kin}}^{\otimes d}(t/K)U_{\text{CQFT}}^{\dag\otimes d} \tilde U_{\text{pot}}(t/(2K)))^K\right\| \le \varepsilon.
\end{align}

\noindent \underline{Trotter-Suzuki error} 
According to the error estimate for the second-order Trotter-Suzuki formula \cite{Jahnke.2000}, we have
\begin{align*}
&\quad \left\|e^{-\mathrm{i}H_n t}-(U_{\text{pot}}(t/(2K)) U_{\text{CQFT}}^{\otimes d} U_{\text{kin}}^{\otimes d}(t/K)U_{\text{CQFT}}^{\dag\otimes d} U_{\text{pot}}(t/(2K)))^K\right\| \\
&=\left\|e^{-\mathrm{i}(T_n+V_n) t}-\left(e^{-\mathrm{i}V_n t/(2K)} e^{-\mathrm{i}T_n t/(2K)} e^{-\mathrm{i}V_n t/(2K)}\right)^K\right\|
\le CK(t/K)^3,
\end{align*}
where $C$ is some constant depending on the discretized matrices $T_n$ and $V_n$. 
If we assume the right-hand side above is upper bound by $\varepsilon/2$, then we reach an estimation of $K$ as 
$$
K = \mathcal{O}(t^{3/2}\varepsilon^{-1/2}). 
$$

\noindent \underline{Approximation error} 
By noting the unitarity of the operations $e^{-\mathrm{i}T_n t/K}$ and $e^{-\mathrm{i}V_n t/(2K)}$, we estimate
\small
\begin{align*}
&\quad \left\|(\tilde U_{\text{pot}}(t/(2K)) U_{\text{CQFT}}^{\otimes d} U_{\text{kin}}^{\otimes d}(t/K)U_{\text{CQFT}}^{\dag\otimes d} \tilde U_{\text{pot}}(t/(2K)))^K
-(U_{\text{pot}}(t/(2K)) U_{\text{CQFT}}^{\otimes d} U_{\text{kin}}^{\otimes d}(t/K)U_{\text{CQFT}}^{\dag\otimes d} U_{\text{pot}}(t/(2K)))^K\right\| \\
&\le 2K\left\|\tilde U_{\text{pot}}(t/(2K))-U_{\text{pot}}(t/(2K))\right\|
\le 2K t/(2K) \|\tilde V_n - V_n\|_\infty = t \|\tilde V_n - V_n\|_\infty. 
\end{align*}
According to Fig.~\ref{appl:Fig2}, we apply the approximate circuit with precision $\delta$ for each implementation of the electron-nucleus/electron-electron interaction part. 
Taking the additive error into account, we have 
\begin{align*}
t \|\tilde V_n - V_n\|_\infty \le t \delta (N_{\text{e}} N_{\text{nuc}}+N_{\text{e}}(N_{\text{e}}-1)/2),
\end{align*}
which implies $\delta = \varepsilon/(t(2N_{\text{e}} N_{\text{nuc}}+N_{\text{e}}(N_{\text{e}}-1)))$ so that $t\|\tilde V_n - V_n\|_\infty \le \varepsilon/2$, and hence, 
Eq.~\eqref{appl:tareq1} is satisfied. 

The distance register needs $2n+\lceil\log_2 d\rceil$ qubits to describe a value in $[0, dL^2)$, which yields the asymptotic gate count of PPP (see Appendix \ref{subsec:appA3} with $L$ replaced by $dL^2$) for each approximation: 
$$
\mathcal{O}\left(dt^{1/(p+1)}N_{\text{tot}}^{2/(p+1)}\varepsilon^{-1/(p+1)}(n^p(\log d)^p+(\log (dtN_{\text{tot}}^2/\varepsilon)^2))\right) 
= \mathcal{\tilde O}(dn^p t^{1/(p+1)}N_{\text{tot}}^{2/(p+1)}\varepsilon^{-1/(p+1)}), 
$$
concerning the parameters $n, d, t, N_{\text{tot}},\varepsilon$ where $N_{\text{tot}}=N_{\text{e}}+N_{\text{nuc}}$ is the total number of particles. 
The hidden linear dependence on $d$ comes from the number of sub-intervals $\tilde M$ because $\tilde M$ is proportional to $dL^2$ in the worst case. In fact, for a ``localized" potential such that the integral regarding its derivative is independent of $d$, $\tilde M$ is uniform with respect to $d$, and the above linear dependency will vanish. 
Moreover, the circuit depth is in the same order as the gate count due to Table \ref{tab:comp2}. 
As for the second-order Trotter-Suzuki formula, we need $K+1$ times implementations of $\tilde U_{\text{pot}}$ and $KdN_{\text{e}}$ times implementations of $U_{\text{kin}}$ (an operation on $n$ qubits). 
Therefore, the total gate count of the Hamiltonian simulation is $\mathcal{\tilde O}\left(d N_{\text{tot}}^{2+2/(p+1)}t^{3/2+1/(p+1)}\varepsilon^{-1/2-1/(p+1)}\right)$, and its circuit depth is $\mathcal{\tilde O}\left(d N_{\text{tot}}^{1+2/(p+1)}t^{3/2+1/(p+1)}\varepsilon^{-1/2-1/(p+1)}\right)$. 
Here, 
the dependence on $N_{\text{tot}}$ for the depth is one order smaller than that for the gate count because we use parallel implementations of the potential part $\tilde U_{\text{pot}}(t/(2K))$ we mentioned in the last paragraph of Sect.~\ref{subsec:4-1}, and this saves the depth by a factor of $N_{\text{e}}$. However, we need $(2n+1+\lceil\log_2 d\rceil)$ ancillary qubits for each electron in this case. Based on PPP and QFT-based/arithmetic-operation-based implementations of the distance registers, we summarize the required resources for the Hamiltonian simulation ($d\ge 2$) in Table \ref{tab:appl1}. We have omitted the polynomial dependence $\mathcal{O}(n^p)$ in gate count/circuit depth because we apply the approximate circuit only to the top $m_0$ qubits of the circuit if $n>m_0$. According to the definition of $m_0$ in Eq.~\eqref{comp:m0}, this yields a negligible factor in gate count/circuit depth of $\mathcal{O}((\log \varepsilon^{-1})^{p})$. 
\begin{table}[htb]
\centering
\caption{Resource estimation for Hamiltonian simulation with second-order Trotter-Suzuki formula. }
\label{tab:appl1}
\scalebox{0.58}[0.58]{
\begin{tabular}{l|cccc}
\hline
& Gate count & Depth & Ancilla count & Qubit count\\
\hline
QFT (sequential) & $\mathcal{\tilde O}\left(d N_{\text{tot}}^{2+2/(p+1)}t^{3/2+1/(p+1)}\varepsilon^{-1/2-1/(p+1)}\right)$ 
& $\mathcal{\tilde O}\left(d N_{\text{tot}}^{2+2/(p+1)}t^{3/2+1/(p+1)}\varepsilon^{-1/2-1/(p+1)}\right)$ 
& $2n+1+\lceil\log_2 d\rceil$ 
& $(dN_{\text{e}}+2)n+1+\lceil\log_2 d\rceil$\\
Arithm. (sequential) & $\mathcal{\tilde O}\left(d N_{\text{tot}}^{2+2/(p+1)}t^{3/2+1/(p+1)}\varepsilon^{-1/2-1/(p+1)}\right)$ 
& $\mathcal{\tilde O}\left(d N_{\text{tot}}^{2+2/(p+1)}t^{3/2+1/(p+1)}\varepsilon^{-1/2-1/(p+1)}\right)$ 
& $2dn+d(d+1)/2$ 
& $(N_{\text{e}}+2)dn+d(d+1)/2$\\
QFT (parallel) & $\mathcal{\tilde O}\left(d N_{\text{tot}}^{2+2/(p+1)}t^{3/2+1/(p+1)}\varepsilon^{-1/2-1/(p+1)}\right)$ 
& $\mathcal{\tilde O}\left(d N_{\text{tot}}^{1+2/(p+1)}t^{3/2+1/(p+1)}\varepsilon^{-1/2-1/(p+1)}\right)$ 
& $N_{\text{e}}(2n+1+\lceil\log_2 d\rceil)$ 
& $N_{\text{e}}((d+2)n+1+\lceil\log_2 d\rceil)$\\
Arithm. (parallel) & $\mathcal{\tilde O}\left(d N_{\text{tot}}^{2+2/(p+1)}t^{3/2+1/(p+1)}\varepsilon^{-1/2-1/(p+1)}\right)$ 
& $\mathcal{\tilde O}\left(d N_{\text{tot}}^{1+2/(p+1)}t^{3/2+1/(p+1)}\varepsilon^{-1/2-1/(p+1)}\right)$ 
& $N_{\text{e}}(2dn+d(d+1)/2)$ 
& $N_{\text{e}}(3dn+d(d+1)/2)$\\
\hline
\end{tabular}
}
\end{table}

To clarify, we compare and emphasize the difference between our estimation and the result in \cite[Theorem 6]{Childs.2022}. 
Childs et al. \cite{Childs.2022} discussed the gate complexity considering the discretization and the Trotter-Suzuki errors by neglecting the approximate error in implementing the potential part.  
In comparison, as we mentioned, our main target in this paper is how the approximation error contributes to the required resources. Thus, we focus on the Trotter-Suzuki and the approximation errors in our estimation by assuming that $n$ is suitably determined. 
Using the second-order Trotter-Suzuki formula, \cite[Theorem 6]{Childs.2022} gives the asymptotic gate count $\mathcal{\tilde O}((dN_{\text{tot}})^{5/2}t^{3/2}\varepsilon^{-1/2})$ in our notation, where the worse $(dN_{\text{tot}})$-dependence owes to the discretization error. 
Then, our result demonstrates an additional dependence of $\mathcal{\tilde O}((t/\varepsilon)^{1/(p+1)})$ in gate count/circuit depth if the approximation error is addressed (using piecewise $p$-degree polynomial approximation). As we discussed in Sect.~\ref{sec:comp}, $p$ usually takes a small integer, e.g., $p=2$ in practical cases. Then, our result shows a comparable polynomial dependence in gate count/depth coming from the approximation error as that coming from the Trotter-Suzuki error, which is a novel finding that one has to consider the approximation error in addition to the Trotter-Suzuki error. 

We end this subsection with a crucial remark on the choice of discretization parameter $n$. Although one can theoretically estimate the discretization error \cite{Childs.2022}, it is not helpful in practice since the result gives only the order, and it seems impossible to derive a sharp number of $n$ even for a detailed Hamiltonian. Thus, the discretization parameter $n$ will probably be chosen by testing or by experience in practice. Since our resource estimation is based on PPP, it is effective when we assume small error $\varepsilon$ and large $n$. In contrast, if $\varepsilon$ and $n$ lie in the region where the exact implementation (WAL with $m_0=n$) is selected, then we have no approximation error and the total gate count of the Hamiltonian simulation is $\mathcal{\tilde O}(d N_{\text{tot}}^2 t^{3/2}\varepsilon^{-1/2})$. 

\section{Conclusion}

In this paper, we focus on a systematic study of the approximate implementation of the real-time evolution operator for potentials with at most one ancillary qubit for a given error bound, which is an important task for first-quantized Hamiltonian simulation on a near-term quantum computer. 

First, we summarize the previous methods WAL and APK, and propose an original method LIU and a modified method PPP. 
Using the explicit implementation of each method, we derive the asymptotic gate count and circuit depth regarding discretization parameter $n$ and precision $\delta$ and find that PPP with a high-degree polynomial outperforms WAL/LIU in the order of precision. Moreover, APK has an even better performance than PPP in the scaling of the pre-factor at the cost of using $\mathcal{O}(\log 1/\delta)$ ancillary qubits. 

Second, we conducted a case study with WAL, LIU, and PPP for the modified Coulomb potential and a damped oscillating function. By employing the theoretical optimal constants in the error estimation for spline interpolation, we divided the interval of $n$ into three sub-intervals $I_1=[1,m_1)$, $I_2=[m_1,m_0]$, and $I_3=(m_0,\infty)$. Here, $m_0$ and $m_1$ are integers calculated from $\delta$, $V$, and the optimal constants. In a practical setting, our results show that for $n\in I_1$, LIU is equivalent to WAL and WAL/LIU is better than PPP. Moreover, for $n\in I_3$, it is sufficient to apply the approximate circuit to the top $m_0$ qubits. 
Hence, we only need to compare the gate count/depth for different methods with $n=m_0$. 
Furthermore, for $n\in I_2$, we determine whether PPP outperforms WAL and LIU depending on $\delta$, $n$, and the norms of the derivatives of the potential. If the norms are sufficiently large and $\delta$ is sufficiently small, then there is some integer $n^\ast\in I_2$ such that PPP is better than WAL and LIU for $n\ge n^\ast$. Among the PPPs with different polynomial degrees, it appears that $p=2$ (quadratic Hermite interpolation) works best in most practical settings of $\delta$. 

Third, we adopted the formulation for a molecular system in \cite{Kosugi.2023}, and by applying the grid-based method, we estimated the gate count and circuit depth for the first-quantized Hamiltonian simulation. Because we are keeping in mind the high-precision case, resources estimation based on PPP is addressed. In addition to the gate count of $\mathcal{\tilde O}(t^{3/2}\varepsilon^{-1/2})$ concerning the second-order Trotter-Suzuki error, we show an extra overhead of $\mathcal{\tilde O}(t^{1/(p+1)}\varepsilon^{-1/(p+1)})$ resulting from the approximation error, where $t$ is the simulation time, $\varepsilon$ is the error bound, and $p$ is the polynomial degree. This implies that the approximation error is not negligible for long-time simulations under the requirement of a small error bound. 

Finally, we provided comprehensive references and novel discussions on the technical details of the implementations, as well as their gate counts, in the appendix. In addition, we proposed two closely related algorithms for finding the coefficients of polynomials on a classical computer, which helped us develop a complete numerical scheme for deriving an approximate quantum circuit of the real-time evolution operator from the input potential $V$, discretization parameter $n$, polynomial degree $p$, and precision $\delta$. 

As an extension of this study, it may be interesting to consider a case in which the potential is given by a general multivariable function. For the one-variable (distance) potential discussed in this paper, distance registers must be implemented, as shown in Fig.~\ref{appl:Fig2}. In particular, this yields an additional cost and introduces a linear dependence on $N_{\text{nuc}}$ in the gate count and circuit depth because the distance registers are derived one by one. In a general setting of multivariable potential, one can implement the electron-nucleus interaction part and external potential part simultaneously, which seems to be more efficient. 

In this paper, we focus on methods with minimal circuit width (number of qubits). As we found in the first part of Sect.~\ref{subsec:comp-1}, APK is a promising method for a high-precision case in the sense that its asymptotic gate count and circuit depth would be even smaller at the cost of using many ancillary qubits. Therefore, there seems to be a trade-off between circuit depth and width. A detailed discussion in this direction is indispensable for practical applications because of the limitation on the maximal depth/width of circuits that can be processed on near-term quantum computers. 

\section*{Acknowledgment}
This work was supported by Japan Society for the Promotion of Science (JSPS) KAKENHI under Grant-in-Aid for Scientific Research No.21H04553, No.20H00340, and No.22H01517. 
This work was partially supported by the Center of Innovations for Sustainable Quantum AI (JST Grant number JPMJPF2221). 

\appendix

\section{Details on the methods}
\label{sec:appA}

\subsection{Sequency-ordered Walsh operator}
\label{subsec:appA1}

In Zhang et al. \cite{Zhang.2022}, the authors optimized the previous circuit in depth by parallelizing quantum gates according to a constructive algorithm. For completeness, we give the circuit for $n=4$ as an example in Fig.~\ref{app:Fig1}. 
\begin{figure}
\centering
\resizebox{15cm}{!}{
\includegraphics[keepaspectratio]{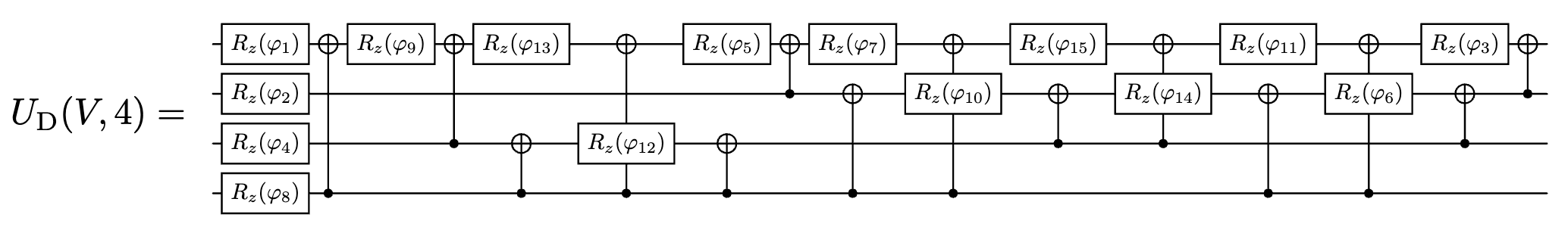}
}
\caption{Example of quantum circuit by sequency-ordered Walsh operators with $n=4$ \cite{Zhang.2022}. Here, $\varphi_j$ is the rotation angle derived by applying the Hadamard matrix to the diagonal vector of the discretized potential matrix. }
\label{app:Fig1}
\end{figure}
For a $2^n\times 2^n$ unitary diagonal matrix, such an implementation uses a global phase gate, $2^n-1$ $R_z$ gates, and $2^n-2$ CNOT gates with depth $2^n$. 
On the other hand, Welch et al. suggested an approximate circuit by applying the Walsh operators only to the top $m<n$ qubits \cite{Welch.2014}. 
Since this is equivalent to conducting a piecewise constant approximation in a $2^m$ uniform grid, we can estimate $m$ for a given precision $\delta$ as
\begin{align*}
\max_{x\in [0,L]}|V^{(1)}(x)|\frac{L}{2^m} \le \delta,
\end{align*}
which implies that the difference of $V$ at arbitrary two adjacent grid points is upper bound by $\delta$. Thus, we can take  
\begin{align*}
m = \left\lceil \log_2 \left(L\frac{\|V^{(1)}\|_\infty}{\delta}\right) \right\rceil.
\end{align*}
Here and henceforth, $V^{(k)}$ is the $k$-th derivative of $V$, and $\|\cdot\|_\infty$ denotes the maximum norm.
In this sense, the approximate circuit by \cite{Welch.2014} uses a global phase gate, $2^m-1$ $R_z$ gates, and $2^m-2$ CNOT gates. 
Moreover, the depth based on the gate set $\{\mathrm{CNOT}, R_z\}$ is $2^m$. 
\cite{Welch.2014} also mentioned the possibility of the reduction by truncating the Walsh operators with small coefficients, but this would introduce an additional truncation error that is hard to estimate analytically.  

\subsection{Linear interpolated unitary diagonal matrix}
\label{subsec:appA2}

According to Fig.~\ref{met:Fig2}, LIU uses $2(n-m)+1$ queries to the $\mathbb{C}^{M\times M}$ unitary diagonal operation and $2(n-m)$ controlled increment/decrement gates on $m$ qubits. 
The former can be realized by $2^m-1$ $R_z$ gates and $2^m-2$ CNOT gates with depth $2^m$ for each query \cite{Zhang.2022}, whereas the latter can be directly implemented by a couple of multi-controlled NOT gates by noting Fig.~\ref{app:Fig2}. 
\begin{figure}
\centering
\resizebox{13cm}{!}{
\includegraphics[keepaspectratio]{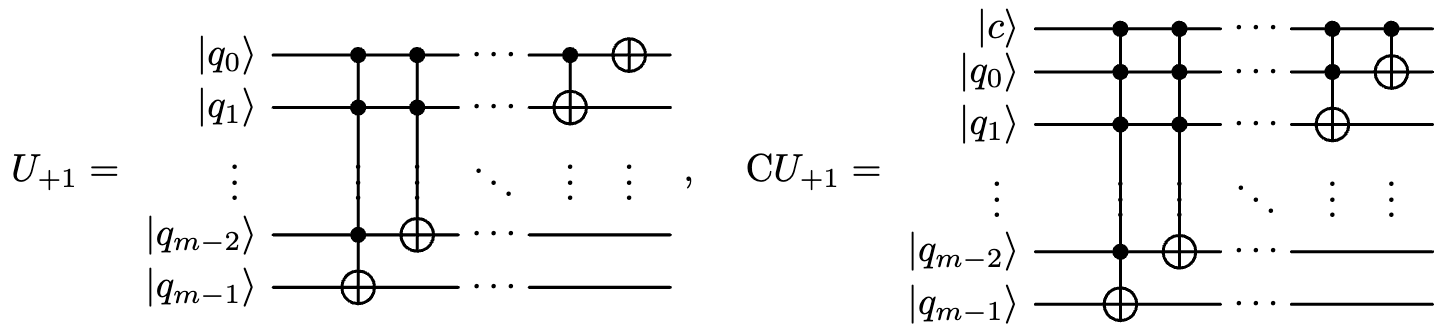}
}
\caption{Quantum circuit of increment gate $U_{\text{+1}}$ and its controlled version according to Exercise 4.27 in \cite{NC2010}. }
\label{app:Fig2}
\end{figure}
Then, LIU needs a global phase gate, $(2^m-1)(2(n-m)+1)$ $R_z$ gates, $(2^m-2)(2(n-m)+1)+2(n-m)$ CNOT gates, and $2(n-m)$ $j$-controlled NOT gates for each $j=2,3,\ldots,m$. 
Using several ancillary qubits, $j$-controlled NOT gates can be realized by CNOT gates and single-qubit gates of linear order $\mathcal{O}(j)$ \cite{Barenco.1995, Gidney2015}. 
Because we try to avoid using ancillas in this paper, we mention another way to implement $U_{\text{+1}}$ based on a quantum Fourier transform (QFT), its inverse, and phase gates, which one can find in \cite[Fig. 2]{Yuan.2023}. 

If one applies the conventional way for the QFT and its inverse, then one needs a global phase gate, $2m$ Hadamard gates, $m$ phase gates, and $m(m-1)$ controlled phase gates for a single increment gate. 
Moreover, a controlled increment gate can be constructed by a global phase gate, $2m$ Hadamard gates, and $m^2$ controlled phase gates since we only need to apply the control to the part of phase gates. 
Under this implementation, LIU alternatively uses a global phase gate, $4m(n-m)$ Hadamard gates, $(2^m-1)(2(n-m)+1)$ phase gates, $(2^m-2)(2(n-m)+1)$ CNOT gates, and $2m^2(n-m)$ controlled phase gates. 
Based on the gate set $\{\mathrm{H}, \mathrm{CNOT}, R_z\}$, the circuit depth is bound by $2^m(2(n-m)+1)+2(n-m)(16m-16)$ because the depth of a QFT circuit or its inverse is $8m-10$, and the controlled phase gates between the QFT and its inverse can be partly implemented with the QFT in parallel, which introduces only an additional depth of 4. 


Next, we determine the parameter $m$ by a given precision $\delta>0$. 
In terms of a classical error estimate for linear interpolation, e.g., \cite{Agarwal.1991, Hall.1976}, we have
\begin{align*}
\max_{j=0,1,\ldots,N-1}|v_j-\tilde v_j| \le \frac{1}{8}\max_{x\in [0,L]}|V^{(2)}(x)| \left(\frac{L}{M}\right)^2 \le \delta,
\end{align*}
where $V\in C^2[0,L]$ is a sufficiently smooth function. Here, $\tilde v_j$ takes the value of the linear interpolation function $\tilde V$ at each uniform knot $x_j = jL/N$, $j=0,1,\ldots,N$.  
Therefore, we have the estimation of $m$ by 
\begin{equation}
m = \left\lceil \log_2 \left(L\sqrt{\frac{\|V^{(2)}\|_\infty}{8\delta}}\right) \right\rceil.
\end{equation}
If the above integer is larger than $n$, then we simply choose $m=n$. In other words, we do not execute linear interpolation, and LIU is equivalent to WAL.  

\subsection{Phase gate for piecewise-defined polynomial}
\label{subsec:appA3}

Similarly to \cite{Kosugi.2023}, we provide an efficient circuit based on multi-controlled phase gates for a single polynomial phase gate: 
\begin{align*}
U_{\text{ph}}[f] = \sum_{j=0}^{N-1} \mathrm{exp}\left(-\mathrm{i}f(x_j)\right) \ket{j}\bra{j}.
\end{align*} 
Here, we assume that $f$ is a $p$-th degree real-valued polynomial function satisfying
$$
f(x) = \sum_{k=0}^{p} a_k x^k, \quad x\in [0,L],
$$
for $a_k\in \mathbb{R}$, $k=0,1,\ldots,p$. Recall that $x_j = jL/N$, and $j\in \{0,1,\ldots,N-1\}$ admits the binary representation $[j_{n-1}j_{n-2}\cdots j_0]$. 
Then, we calculate
\begin{align*}
a_k x_j^k &= a_k j^k (L/N)^k \\
&= a_k (L/N)^k \left(\sum_{\ell=0}^{n-1}j_\ell 2^{\ell}\right)^k \\
&= \sum_{\ell_0=0}^{n-1}\cdots \sum_{\ell_{k-1}=0}^{n-1} a_k (L/N)^k 
 2^{\ell_0+\cdots +\ell_{k-1}} \delta_{j_{\ell_0},1} \cdots \delta_{j_{\ell_{k-1}},1}.   
\end{align*}
Thus, we have
\begin{align*}
U_{\text{ph}}[f] \ket{j} &= \mathrm{exp}\left(-\mathrm{i}f(x_j)\right) \ket{j} \\
&= \prod_{k=0}^p \mathrm{exp}\left(-\mathrm{i}a_k x_j^k\right) \ket{j} \\
&= \prod_{k=0}^p \left(\prod_{\ell_0=0}^{n-1}\cdots \prod_{\ell_{k-1}=0}^{n-1} \mathrm{exp}\left(-\mathrm{i}a_k (L/N)^k 2^{\ell_0+\cdots +\ell_{k-1}} \delta_{j_{\ell_0},1} \cdots \delta_{j_{\ell_{k-1}},1}\right)\right) \ket{j} \\
&= \prod_{k=0}^p \left(\prod_{\ell_0=0}^{n-1}\cdots \prod_{\ell_{k-1}=0}^{n-1} \mathrm{C}Z_{\ell_0,\ldots,\ell_{k-1}}\left(-a_k (L/N)^k 2^{\ell_0+\cdots +\ell_{k-1}}\right)\right) \ket{j},
\end{align*}
where $\mathrm{C}Z_{\ell_0,\ldots,\ell_{k-1}}\left(\theta\right)$ denotes a (multi-)controlled phase gate, which yields a phase factor $e^{\mathrm{i}\theta}$ if and only if the $\ell_0$-th, \ldots, $\ell_{k-1}$-th qubits are all in state $\ket{1}$ (repeated indices are allowed). Equivalently, it conducts $Z(\theta) = \ket{0}\bra{0}+e^{\mathrm{i}\theta}\ket{1}\bra{1}$ to an arbitrary qubit among $\ket{q_{\ell_0}},\ldots,\ket{q_{\ell_{k-1}}}$ with the others playing the role of control qubits. 
Therefore, the implementation of $U_{\text{ph}}[f]$ uses at most $\sum_{k=0}^p n^k = (n^{p+1}-1)/(n-1)$ (multi-)controlled phase gates. 
More efficiently, the (multi-)controlled phase gates with the same target and control qubits can be combined into a single (multi-)controlled phase gate by adding their rotation angles, which implies that we need only $\sum_{k=0}^{\min\{p,n\}} \mathrm{C}_k^n$ multi-controlled phase gates (i.e., $\mathrm{C}_k^n$ $(k-1)$-controlled phase gates for $k=1,2,\ldots,\min\{p,n\}$ with a global phase gate) instead. Here, $\mathrm{C}_k^n$ denotes the number of k-combinations of an $n$-element set. 
For $p<n$, the total number of (multi-)controlled phase gates is both $\mathcal{O}(n^p)$, while for $p\ge n$, the total number of (multi-)controlled phase gates is reduced from $\Omega(n^n)$ to $2^n$. 
If we adopt the techniques for a generic multi-controlled single-qubit gate proposed by Silva and Park \cite{SP2022}, the depth of a $j$-controlled phase gate is $\mathcal{O}(j)$ even without any ancillary qubit. As a result, the depth of the total circuit for a $p$-th degree polynomial phase gate $U_{\text{ph}}[f]$ is $\mathcal{O}(pn^p)$. 

Now assume the piecewise polynomial approximation 
\begin{equation*}
\tilde V(x) = \sum_{\ell=0}^{\tilde M-1} \chi_{[\tilde x_{\ell}, \tilde x_{\ell+1})}(x) f_{\ell}(x), \quad x\in [0,L],
\end{equation*}
where $f_\ell$ is a polynomial function with degree $p_{\ell}$ for each $\ell=0,1,\ldots,\tilde M-1$. 
According to the above discussion, the phase gate for a $p$-th degree polynomial is implemented by a global phase gate, $\mathrm{C}_1^n$ phase gates, $\mathrm{C}_2^n$ controlled phase gates, $\mathrm{C}_3^n$ $2$-controlled phase gates, \ldots, $\mathrm{C}_{\min\{p,n\}}^n$ $(\min\{p,n\}-1)$-controlled phase gates. Similarly, the controlled phase gate for a $p$-th degree polynomial consists of a phase gate, $\mathrm{C}_1^n$ controlled phase gates, $\mathrm{C}_2^n$ $2$-controlled phase gates, $\mathrm{C}_3^n$ $3$-controlled phase gates, \ldots, $\mathrm{C}_{\min\{p,n\}}^n$ $\min\{p,n\}$-controlled phase gates. 
On the other hand, the quantum comparator with a given integer in \cite{Yuan.2023} is explicitly built by a QFT and its inverse quantum Fourier transform (IQFT) on $m$ qubits, a QFT and its IQFT on $m+1$ qubits, as well as $2m+1$ phase gates. A conventional implementation of QFT on $m$ qubits requires $m$ Hadamard gates, $m(m-1)/2$ controlled phase gates with a possible global phase gate, and $\lfloor m/2\rfloor$ swap gates. 
Since the swap gates of QFT can be canceled in the circuit in \cite{Yuan.2023} by the swap gates of its inverse, we find that the quantum comparator uses (at most) a global phase gate, $4m+2$ Hadamard gates, $2m+1$ phase gates, and $2m^2$ controlled phase gates. 
As a result, PPP needs a global phase gate, $(8m+4)(\tilde M-1)$ Hadamard gates, $n+(4m+3)(\tilde M-1)$ phase gates, $n(n-1)/2+(4m^2+n)(\tilde M-1)$ controlled phase gates, $\mathrm{C}_3^n+\mathrm{C}_2^n (\tilde M-1)$ $2$-controlled phase gates, \ldots, $\mathrm{C}_{\min\{p,n\}}^n+\mathrm{C}_{\min\{p,n\}-1}^n(\tilde M-1)$ $(\min\{p,n\}-1)$-controlled phase gates, and $\mathrm{C}_{\min\{p,n\}}^n(\tilde M-1)$ $\min\{p,n\}$-controlled phase gates in total. 

In particular, for $p=1$, PPP requires a global phase gate, $(8m+4)(\tilde M-1)$ Hadamard gates, $n+(4m+3)(\tilde M-1)$ phase gates, and $(4m^2+n)(\tilde M-1)$ controlled phase gates. Based on the gate set $\{\mathrm{H}, \mathrm{CNOT}, R_z\}$, the depth for $U_{\text{ph}}$ is $1$, the depth for controlled $U_{\text{ph}}$ is (at most) $3n+1$ (each controlled phase gate consists of $3$ phase gates and $2$ CNOT gates with depth $4$, one of the phase gates for these controlled phase gates can be implemented in parallel with the controlled global phase gate), and the depth for the comparator \cite{Yuan.2023} is $16m+16m-16$ because the depth of a QFT/IQFT on $k$ qubits is $8k-10$, $k=m,m+1$,  and the controlled phase gates between the QFT and IQFT can be partly implemented in parallel with the QFT, which introduces an additional depth of $4$. Therefore, the circuit depth is bound by $(\tilde M-1)(4n+64m-32)+1$ for $p=1$. 
As for $p=2$, PPP requires a global phase gate, $(8m+4)(\tilde M-1)$ Hadamard gates, $n+3n(n-1)/2+(7n(n-1)/2+3n+12m^2+4m+3)(\tilde M-1)$ phase gates, and $n(n-1)+(4n(n-1)+2n+8m^2)(\tilde M-1)$ CNOT gates. The depth for $U_{\text{ph}}$ is (at most) $4n$, the depth for controlled $U_{\text{ph}}$ is (at most) $(6n+2)n$ 
and thus, the circuit depth is upper bound by $(\tilde M-1)(6n^2+2n+64m-32)+4n$. If we consider the parallel implementation for the polynomial phase gate ($n(n-1)/2$ controlled phase gates and $n$ phase gates can be implemented in depth $n$), then we can use the depth-efficient circuit for multiple controlled phase gates with the same control qubit \cite{Gidney2017} to reduce the depth from $\mathcal{O}(\tilde M n^2)$ to $\mathcal{O}(\tilde M n\log_2 n)$. 
For general $p\ge 2$, the gate count and depth depend also on $p$. For example, a naive implementation of a $p$-controlled phase gate without any ancillary qubits uses $\mathcal{O}(p^2)$ single-qubit gates and CNOT gates. We conjecture that the depth in this case is $\mathcal{O}(\tilde M n^p)$. 
More precisely, we mention that the above estimations are based on a common implementation of a multi-controlled phase gate and thus give an overhead on the gate count of PPP. In this paper, the CNOT gates for $2$-controlled and $3$-controlled phase gates are counted as $8$ and $20$, respectively. Although they can be reduced to $6$ and $14$ separately using smarter implementations \cite{Amy.2019}, this does not change the crucial dependence on $n$, $\delta$, etc.

Next, we estimate the parameters $m$ and $\tilde M$ for a given precision $\delta>0$. 
Here, we fix the polynomial degree $p\ge 1$ and give analytic estimations of these parameters. 
Recall that with a coarse-graining parameter $m$, we select the internal knots for polynomial interpolation among $x_{Nk/M} = kL/M$, $k=1,\ldots, M-1$, and the distance between two adjoint knots is at most $L/M$. 
Using the classical error estimation for the spline method \cite{Agarwal.1991, Hall.1976, Dubeau.1996}, assuming $V\in C^{p+1}[0,L]$, we have
\begin{align}
\label{app:err-est}
\max_{j=0,\ldots,N-1}|v_j-\tilde V(x_j)| \le \max_{x\in [0,L]}|V(x)-\tilde V(x)| \le C_p\max_{x\in [0,L]}|V^{(p+1)}(x)| \left(\frac{L}{M}\right)^{p+1} \le \delta.
\end{align}
Here, $C_p$ is a pre-constant depending on degree $p$. The optimal pre-constant is known for several types of spline methods, which can be found in the above references. The word ``optimal" means that the second inequality becomes equality in the worst case (see Appendix \ref{subsec:appB1} for more details). 
Thus, we estimate $m$ for given degree $p$ and precision $\delta$ as
\begin{align*}
m = \left\lceil\log_2 \left(L\left(\frac{C_p\|V^{(p+1)}\|_\infty}{\delta}\right)^{\frac{1}{p+1}}\right) \right\rceil.
\end{align*}
For arbitrarily fixed $x\in [0,L]$, let $h(x)$ solve
$$
C_p \max_{y\in [x,x+h(x)]} |V^{(p+1)}(y)| (h(x))^{p+1} = \delta.
$$
Note that $h(x)$ is the maximal interval size at $x$ such that the error is upper bound by $\delta$. Then the number of intervals $\tilde M$ can be approximated by 
\begin{align*}
\tilde M \approx \int_0^L \frac{1}{h(x)} dx \ge \int_0^L \left(\frac{C_p|V^{(p+1)}(x)|}{\delta}\right)^{\frac{1}{p+1}} dx =: \hat M_{p}. 
\end{align*}
In general, we cannot take the number of polynomial $\tilde M$ as the above lower bound $\hat M_{p}$, and  $\tilde M$ (possibly depending on the set of knots $\{kL/M\}_{k=1}^{M-1}$) is numerically calculated by some algorithms, for example, the ones we propose in Appendix \ref{subsec:appA5}. However, we observe that $\hat M_{p} \le \tilde M \le c \hat M_{p}\le M$ for some constant $c>1$, which becomes close to $1$ as $\delta\to 0$ (see Appendix \ref{subsec:appB2}). Therefore, we can approximately use $\hat M_{p}$ for the asymptotic gate count. 

\subsection{Arithmetic operations and phase kickback}
\label{subsec:appA4}

According to \cite{Kassal.2008, Jones.2012, Babbush.2018, Haner.2018, Sanders.2020}, we provide the gate implementation and estimate the gate count and the required number of ancillary qubits in each step mentioned in Sect.~\ref{sec:methods}. 

For step 1, H\"aner et al. \cite{Haner.2018} suggested using a comparator with a given integer by a carry circuit \cite{Haner.2017}, an incrementer by \cite{Gidney2015}, and the uncomputation for each interval of the piecewise function. 
The comparator in \cite{Haner.2017} on an $n$-qubit system register uses $4(n-1)$ Toffoli gates, (at most) $2n+2$ CNOT gates, and (at most) $2(n-1)$ NOT gates with $n$ uninitialized ancillas and a zero-initialized ancilla. 
On the other hand, the incrementer by \cite{Gidney2015} uses $4(n-1)$ Toffoli gates, $10(n-1)$ CNOT gates, and $2n-1$ NOT gates with $n$ uninitialized ancillas, which implies that step 1 requires $12(n-1)(\tilde M-1)$ Toffoli gates, $(14n-6)(\tilde M-1)$ CNOT gates, and $(6n-5)(\tilde M-1)$ NOT gates with $n$ uninitialized ancillas, a zero-initialized ancilla and a zero-initialized register of $\lceil\log_2 \tilde M\rceil$ qubits. 

For step 2, according to the QROM circuit in \cite[Fig. 10]{Babbush.2018}, loading classical data to a data register of $b_{\text{data}}$ qubits requires $2(\tilde M-1)$ Toffoli gates, (at most) $b_{\text{data}}\tilde M + \tilde M -1$ CNOT gates, and $2(\tilde M-1)$ NOT gates with $\lceil\log_2 \tilde M\rceil +1$ initialized ancillas and a zero-initialized register of $b_{\text{data}}$ qubits. 
Owing to the specific structure, the circuit can be simplified to $4(\tilde M-1)$ T gates, $(b_{\text{data}}+\mathcal{O}(1))\tilde M$ CNOT gates, and $\mathcal{O}(\tilde M)$ Hadamard or S gates \cite[Fig. 4]{Babbush.2018}. 

In our formulation, we regard the system register as an integer register whose value is proportional to the grid points with a multiplier $\Delta x = L/N$. Then, the operations in step 1 can be executed with an integer comparator at a relatively cheap cost. 
By noting 
\begin{align*}
f_\ell(x_j) &= a_p^{(\ell)} x_j^p + \cdots + a_1^{(\ell)} x_j + a_0^{(\ell)}\\
&= \left(a_p^{(\ell)}(\Delta x)^p\right) j^p + \cdots + (a_1^{(\ell)}\Delta x) j + a_0^{(\ell)},
\end{align*}
for each $\ell=0,1,\ldots,\tilde M-1$, we need to load $p+1$ coefficients to the data registers, which store non-integer values. 
For simplicity, we follow the idea of \cite{Haner.2018} to consider fixed-point arithmetic and represent a number $y$ in a non-integer register using $b_{\text{tot}}$ binary bits as 
$$
y = y_{b_{\text{tot}}-1}\cdots y_{b_{\text{dec}}}.y_{b_{\text{dec}}-1}\cdots y_0,
$$
where $b_{\text{dec}}$ is the number of bits after the decimal point. Henceforth, we set $b_{\text{data}}=b_{\text{tot}}$, which will be estimated later. In other words, we use the same $(b_{\text{tot}},b_{\text{dec}})$ binary representation for all the data registers. 
To sum up, step 2 requires $2(p+1)(\tilde M-1)$ Toffoli gates, (at most) $(p+1)(b_{\text{tot}}\tilde M + \tilde M -1)$ CNOT gates, and $2(p+1)(\tilde M-1)$ NOT gates with $\lceil\log_2 \tilde M\rceil +1$ initialized ancillas and zero-initialized registers of totally $(p+1)b_{\text{tot}}$ qubits. 

For step 3, we first introduce another non-integer register for the grid points: $\ket{x(j)}$ with the same parameters $(b_{\text{tot}},b_{\text{dec}})$. By noting that $x(j)=x_j=j\Delta x$, the preparation of such a register boils down to a fixed-point multiplication of $j/N$ and $L$. According to \cite[Appendix B]{Haner.2018}, such a fixed-point multiplication uses controlled addition circuits on $k$ qubits for $k=b_{\text{dec}}+1,\ldots,b_{\text{tot}}$ and another controlled addition circuits on $k$ qubits for $k=b_{\text{tot}}-b_{\text{dec}},\ldots,b_{\text{tot}}-1$ with $b_{\text{tot}}$ zero-initialized ancillas. By employing the controlled addition circuits in \cite{Takahashi.2010}, the fixed-point multiplication uses $\frac{3}{2}b_{\text{tot}}^2 + 3b_{\text{dec}}b_{\text{tot}} - 3b_{\text{dec}}^2 +\frac{9}{2}b_{\text{tot}}-3b_{\text{dec}}$ Toffoli gates and $\mathcal{O}(b_{\text{tot}}^2)$ CNOT gates. 
Here, we mention that due to the truncation of the lower qubits if $n>b_{\text{dec}}$, the values in the register $\ket{x(j)}$ have an error of at most $L/2^{b_{\text{dec}}}$.

Also, we use the above $(b_{\text{tot}},b_{\text{dec}})$ binary representation for the result register $\ket{\tilde V(x_j)}$. 
Then, by applying the Horner scheme, we find that the implementation of the piecewise polynomial in a result register requires $p$ fixed-point addition and $p$ fixed-point multiplication operations:
\begin{align*}
\ket{0} &\mapsto \ket{a_p^{(\ell)} x(j)+a_{p-1}^{(\ell)}} 
\mapsto \ket{a_p^{(\ell)} x(j)^2+a_{p-1}^{(\ell)} x(j)+a_{p-2}^{(\ell)}} \mapsto \cdots \mapsto \ket{a_p^{(\ell)} x(j)^p + \cdots + a_1^{(\ell)} x(j) + a_0^{(\ell)}}.
\end{align*}
Then, step 3 requires $p\left(\frac{3}{2}b_{\text{tot}}^2 + 3b_{\text{dec}}b_{\text{tot}} - 3b_{\text{dec}}^2 +\frac{13}{2}b_{\text{tot}}-3b_{\text{dec}}-1\right)$ Toffoli gates and $\mathcal{O}(p b_{\text{tot}}^2)$ CNOT gates with $p+1$ zero-initialized result registers containing $(p+1)b_{\text{tot}}$ qubits. 
Here, we do not include the gate count of generating $\ket{x(j)}$ as there may be a better way to do it, and this yields at most a pre-factor to the above estimation. 
Moreover, we mention that the enduring ancillary qubits can be reduced to $2b_{\text{tot}}$ if one finds and employs an efficient in-place fixed-point multiplication. 
Furthermore, considering the truncation errors in the registers $\ket{x(j)}$ and $\ket{a_k^{(\ell)}}$, the accumulated error in the result register is upper bound by $c_p(1+\sum_{q=1}^p(|a_q|_\infty+1) L^q)/2^{b_{\text{dec}}}$ where $|a_p|_\infty$ denotes the maximum of $|a_p^{(\ell)}|$ over all the intervals $[\tilde x_\ell, \tilde x_{\ell+1})$, $\ell=0,1,\ldots,\tilde M-1$ and $c_p$ is some constant that possibly depends on $p$. 

Now, we estimate the parameters $b_{\text{dec}}$ and $b_{\text{tot}}$ for a given precision $\delta/2$. For the number of bits after the decimal point, the above estimate immediately gives
\begin{align*}
b_{\text{dec}} = 1+\left\lceil\log_2 c_p(1+\sum_{q=1}^p(|a_q|_\infty+1) L^q)/\delta \right\rceil,
\end{align*}
which is sufficient. On the other hand, for the integer part of the register, we have to avoid overflow in the fixed-point arithmetic operations to guarantee that the values in the result register approximate the calculated piecewise polynomial function. 
Then, the integer part should be able to describe values up to $\sum_{q=0}^p |a_q|_\infty L^q$, which implies that a total number of
\begin{align*}
b_{\text{tot}} = 1 + \left\lceil\log_2 c_p(1+\sum_{q=1}^p(|a_q|_\infty+1) L^q)/\delta \right\rceil +\left\lceil\log_2 \sum_{q=0}^p|a_q|_\infty L^q \right\rceil
\end{align*}
is sufficient to derive a precision $\delta/2$ in the result register. 
In particular, if $p=1$, we have its upper bound as 
$$
b_{\text{tot}} = 1 + \left\lceil\log_2 (L\|V^{(1)}\|_\infty + \|V\|_\infty) \right\rceil + \left\lceil\log_2 (1 + L + L\|V^{(1)}\|_\infty)\right\rceil + \lceil\log_2 1/\delta\rceil.
$$
We remark that the number of qubits in the register for the result depends on the length $L$, some $L^\infty$ norms of the potential, as well as its derivatives, and the desired precision $\delta$, but is independent of the grid parameter $n$. 
Moreover, if $x_j$ happens to have a $b_{\text{tot}}$-bit binary representation for all $j=0,1,\ldots,N-1$, then there is no error in the register $\ket{x(j)}$, and we find that 
$$
b_{\text{dec}} = 1+ \left\lceil\log_2 \sum_{q=0}^p L^q/\delta \right\rceil
$$
is enough. 

For step 4, we use the $(b_{\text{tot}}+b_{\text{ext}})$-bit binary representation for $\ket{\gamma(\xi)}$, where $b_{\text{ext}}$ is some parameter for extra qubits. 
According to \cite{Jones.2012}, the phase kickback is approximated by a modular adder between the result register (we can regard it as an integer register this time) and a prepared phase kickback register, which is derived by loading an integer smaller than $2^{b_{\text{tot}}+b_{\text{ext}}}$ and applying a QFT circuit. 
Therefore, step 4 requires a QFT on $(b_{\text{tot}}+b_{\text{ext}})$ qubits, $2(b_{\text{tot}}+b_{\text{ext}})-1$ Toffoli gates, and $\mathcal{O}(b_{\text{tot}}+b_{\text{ext}})$ CNOT gates with $(b_{\text{tot}}+2b_{\text{ext}})$ zero-initialized ancillas for the registers and addition by the adder in \cite{Takahashi.2010}. In order to guarantee a precision $\delta/4$, the parameter $b_{\text{ext}}$ should satisfy
\begin{align*}
b_{\text{ext}} = b_{\text{dec}}-b_{\text{tot}} + \left\lceil \log_2 \frac{8\pi\|V\|_\infty}{\delta}\right\rceil.
\end{align*}

Finally, step 5 includes the uncomputation, which doubles the gate count and circuit depth in steps 1--3. 

If we assume the potential $V$ is sufficiently smooth with bounded maximum norms and pay attention only to the parameters $p$, $n$, and $\delta$, then APK uses $\mathcal{O}\left((n+p)/\delta^{1/(p+1)}+p(p+\log (1/\delta))^2\right)$ Toffoli gates, $\mathcal{O}\left((n+p)/\delta^{1/(p+1)}+p(p+\log (1/\delta))^2\right)$ Clifford gates, and $\mathcal{O}\left((\log (1/\delta))^2\right)$ phase gates with 
$\mathcal{O}(1+p(p+\log (1/\delta)))$ initialized ancillary qubits. 

\subsection{Spline method}
\label{subsec:appA5}

We end this section with a proposal of classical algorithms to find the coefficients of a suitable piecewise polynomial approximation under a given precision. 
We note that the problem of finding piecewise polynomial approximations for a given function under a given precision does not admit uniqueness as one has many parameters such as the number of sub-intervals $\tilde M$ and the degrees of the polynomial $p_\ell$, $\ell=1,2,\ldots,\tilde M$ in each sub-interval. 
Moreover, even if the above parameters are given, finding an optimal piecewise polynomial approximation is not unique because we have selections on the choice of the norm (used to evaluate the difference between the approximate and the original function). 

Here, we use the maximum/supremum norm and consider the spline interpolation to derive the piecewise polynomial, which is efficient and avoids the problem of Runge's phenomenon. 
We mention other possible efficient methods, for example, the least square method or minimax approximation based on solving a minimization problem. Here, we choose the spline method since there are theoretical error estimations with known optimal pre-constants \cite{Agarwal.1991, Hall.1976, Dubeau.1996}, which helps us avoid numerically solving a minimization problem if the polynomial degree is specified in advance and enables us to estimate the parameters analytically. In particular, evaluations of the division parameter $m$ and number of intervals $\tilde M$ yield the gate count/depth of the quantum circuit. 

Our first proposal is for the case where the polynomial degree is a given constant. The procedure is listed as follows:

\noindent \underline{Algorithm 1}
\begin{enumerate}
\item Given a function $V$ with domain $[0,L]$, precision $\delta>0$, polynomial degree $p$, calculate the $(p+1)$-th derivative of $V$ and the optimal pre-constant $C_p$ which can be found in \cite{Agarwal.1991, Hall.1976, Dubeau.1996}. 
\item Calculate the division parameter $m\in \mathbb{N}$ explicitly by 
\begin{equation*}
m = \left\lceil\log_2 \left(L\left(\frac{C_p\|V^{(p+1)}\|_\infty}{\delta}\right)^{\frac{1}{p+1}}\right) \right\rceil, 
\end{equation*}
which defines $2^m$ sub-intervals $I_j=[jL/2^m, (j+1)L/2^m]$, $j=0,1,\ldots,2^m-1$.  
\item (i) Let $j_1=0$, $j_2=1$. 
(ii) Let $j_2 \leftarrow j_2+1$, calculate the maximum norm of the $(p+1)$-th derivative of $V$ in the interval $[j_1L/2^m, j_2L/2^m]$ denoted by $\|V_{j_1,j_2}^{(p+1)}\|_\infty$. If 
\begin{equation*}
C_p\|V_{j_1,j_2}^{(p+1)}\|_\infty \left((j_2-j_1)L/2^m\right)^{p+1}\le \delta, 
\end{equation*}
then go to the beginning of (ii). Otherwise, save $j_2-1$ and update $j_1\leftarrow j_2-1$, $j_2\leftarrow j_1+1$, and go to the beginning of (ii). 
(iii) Continue (ii) and stop when $j_2=2^m+1$. 
Then, a rough division of $\tilde M\le 2^m$ sub-intervals is obtained according to the saved numbers. 
\item Using the $\tilde M+1$ endpoints of the derived $\tilde M$ sub-intervals as knots, calculate the coefficients of the polynomial in each sub-interval by applying the $p$-th degree spline method. 
\end{enumerate}
As we know, by quantum computing, we divide the interval into $2^m$ sub-intervals, and Steps 1--2 calculate explicitly the minimal $m$ such that the error is still bound by $\delta$ in the ``worst" (most rapidly varying) sub-interval of $V$. 
Since the error by spline interpolation is very small in some mildly varying parts of $V$, Step 3 combines several sub-intervals so that the error in every new sub-intervals would be close to the given error bound $\delta$. 
Of course, one can skip Step 3 and set $\tilde M=2^m$ in Step 4 by using the sub-intervals $I_j$ in Step 2, which yields a uniform division, but Step 3 gives a more flexible non-uniform division, which reduces the number of sub-intervals. 

Next, we propose a more involved algorithm where the polynomial degree can differ in each sub-interval. As mentioned above, there is never uniqueness in such a varying degree case, and we need to introduce some object function to obtain one unique solution. 

\noindent \underline{Algorithm 2}
\begin{enumerate}
\item Given a function $V$ with domain $[0,L]$, precision $\delta>0$, grid parameter $n$ and admissible sets $\mathcal{M} = \{m_{\text{min}}, \ldots, m_{\text{max}}\}$, $\mathcal{P} = \{p_{\text{min}}, \ldots, p_{\text{max}}\}$ with $m_{\text{max}}\le n$. 

\item For each $m\in \mathcal{M}$, define $2^m$ sub-intervals $I_k=[kL/2^m, (k+1)L/2^m]$, $k=0,1,\ldots,2^m-1$. Then, calculate the polynomial degree for each $k=0,1,\ldots,2^m-1$ by solving the following constrained (minimization) problem:
\begin{equation}
\label{app:minp1}
p_k = p_k(m) := \min\left\{p\in \mathcal{P}: \, C_p\|V_{k,k+1}^{(p+1)}\|_\infty\left(L/2^m\right)^{p+1}\le \delta \right\}, 
\end{equation}
where $C_p$ denotes the optimal pre-constant in \cite{Agarwal.1991, Hall.1976, Dubeau.1996}, and $\|V_{k,k+1}^{(p+1)}\|_\infty$ is the same notation as that in Algorithm 1. 

\item For each $m\in \mathcal{M}$, execute the following steps.  

(i) Let $j_1=0$, $j_2=1$. 
(ii) Let $j_2 \leftarrow j_2+1$. If $p_{j_2-1}(m)\not=p_{j_1}(m)$, then save $j_2-1$ and update $j_1\leftarrow j_2-1$, $j_2\leftarrow j_1+1$, and go to the beginning of (ii). Else if 
\begin{equation*}
C_{p_{j_1}}\|V_{j_1,j_2}^{(p_{j_1}+1)}\|_\infty \left((j_2-j_1)L/2^m\right)^{p_{j_1}+1}\le \delta, 
\end{equation*}
then go to the beginning of (ii). Otherwise, save $j_2-1$ and update $j_1\leftarrow j_2-1$, $j_2\leftarrow j_1+1$, and go to the beginning of (ii). 
(iii) Continue (ii) and stop when $j_2=2^m+1$. 
Then, a rough division of $\tilde M = \tilde M(m)\le 2^m$ sub-intervals is obtained according to the saved numbers. 

\item Based on the parameter $m\in \mathcal{M}$, the calculated $p_{k}(m)$ and $\tilde M(m)$, define an objective function $F(m;n)=\tilde F(m,p_{k}(m),\tilde M(m);n)$ where $\tilde F$ could be CNOT count for the quantum circuit implementing the piecewise polynomial function with parameters $n,m,\tilde M$ and $p_k$. Solve the minimization problem:
\begin{equation}
\label{app:minp2}
m^\ast = \mathrm{argmin}_{m\in \mathcal{M}} F(m;n).
\end{equation}

\item According to the parameters $m^\ast$, $\tilde M^\ast = \tilde M(m^\ast)$, and $p_k^\ast = p_k(m^\ast)$, $k=0,1,\ldots,2^{m^\ast}-1$, calculate the coefficients of the $\tilde M^\ast$ polynomials by applying the $p_{k_j}$-th degree spline method for each interval with index $j=0,1,\ldots,\tilde M^\ast-1$. 
\end{enumerate}
Algorithm 2 is a specific classical algorithm to find a suitable piecewise polynomial approximation to a given function corresponding to its quantum implementation. 
Here, we need to introduce two finite admissible sets $\mathcal{M}$, $\mathcal{P}$ to guarantee the well-posedness of the minimization problems \eqref{app:minp1} and \eqref{app:minp2} (unique existence of the solution to each problem). 
Moreover, for the given function $V$ and precision $\delta$, $m_{\text{max}}$ and $p_{\text{max}}$ must be chosen sufficiently large, or there could be no solution to the minimization problems. 
The comparison between Algorithms 1 and 2 will be provided in Appendix \ref{subsec:appB1}. 

\section{Comparison of different methods}
\label{sec:appB}

\subsection{Comparison of classical algorithms}
\label{subsec:appB1}

In this subsection, we compare the classical algorithms based on the spline methods by using an example. 
Assume that we are given the modified Coulomb potential $V(x)=A/\sqrt{a^2+(x-L/2)^2}$, $x\in [0,L]$ with $A=1$, $L=20$, and $a^2=0.5$. The numerically calculated coarse-graining parameter $m$ and sub-interval number $\tilde M$ by several different spline interpolations are listed in Table \ref{tab:appB-1}. 
\begin{table}[htb]
\centering
\caption{Parameters $m$, $\tilde M$ by different splines with respect to $\delta$ 
for modified Coulomb potential with $A=1$, $a^2=0.5$, $L=20$.}
\label{tab:appB-1}
\scalebox{1.0}[1.0]{
\begin{tabular}{l|ccccc}
\hline
$m$, $\tilde M$ & $\delta=10^{-1}$ & $\delta=10^{-2}$ & $\delta=10^{-3}$ & $\delta=10^{-4}$ & $\delta=10^{-6}$ \\
\hline
Uniform linear$^{[a]}$ & 6, 64 & 7, 128 & 9, 512 & 11, 2048 & 14, 16384 \\
Linear$^{[b]}$ & 6, 12 & 7, 26 & 9, 70 & 11, 216 & 14, 2072 \\
Quadratic$^{[c]}$ & 5, 8 & 6, 16 & 7, 30 & 8, 60 & 11, 270 \\
Cubic$^{[d]}$ & 6, 12 & 6, 18 & 7, 28 & 8, 48 & 10, 128 \\
Cubic Hermite$^{[e]}$ & 5, 8 & 6, 12 & 7, 18 & 7, 30 & 9, 94 \\
Optimized degree-varying$^{[f]}$ & 6, 12 & 7, 26 & 8, 70 & 8, 124 & 10, 594 \\
\hline
\end{tabular}
}
\begin{flushleft}
\footnotesize $[a]$ Algorithm 1 without Step 3 for $p=1$ and $C_1=1/8$ (linear interpolation \cite{Agarwal.1991, Hall.1976}).  

$[b]$ Algorithm 1 for $p=1$ and $C_1=1/8$ (linear interpolation \cite{Agarwal.1991, Hall.1976}). 

$[c]$ Algorithm 1 for $p=2$ and $C_2=2/81$ (two-point Hermite interpolation \cite{Agarwal.1991}). 

$[d]$ Algorithm 1 for $p=3$ and $C_3=5/384$ (cubic spline \cite{Hall.1976}).

$[e]$ Algorithm 1 for $p=3$ and $C_3=1/384$ (cubic Hermite spline \cite{Agarwal.1991, Hall.1976}).

$[f]$ Algorithm 2 for $\mathcal{P}=\{1,2,3\}$, $\mathcal{M}=\{2,\ldots,n\}$ and $C_1=1/8$, $C_2=2/81$, $C_3=1/384$ with $n=19$. Here, the objective function is set to be CNOT count corresponding to PPP (see Appendix \ref{subsec:appA3}). 
\end{flushleft}
\end{table}
Comparing $\tilde M$ for [a] and [b], we find Step 3 in Algorithm 1 (non-uniform division) heavily reduces the number of intervals. In addition, the results for [d] and [e] show that the cubic Hermite spline outperforms the cubic spline of type I/type II \cite{Hall.1976} since its optimal pre-constant is smaller. Moreover, focusing on [b], [c], and [e], we find that a higher degree spline interpolation gives smaller parameters $m$ and $\tilde M$. Furthermore, Algorithm 2 provides ``worse" parameters than Algorithm 1 by noting [e] and [f]. However, it gives a better CNOT count because the quantum circuit for a higher degree polynomial is more expensive, and it balances the polynomial degree regarding the cost of quantum implementation.
Using the quantum circuit of PPP, we can calculate the CNOT count of the quantum circuit analytically, and the result is given in Table \ref{tab:appB-2}. 
\begin{table}[htb]
\centering
\caption{CNOT count by different splines and PPP with respect to $\delta$ 
for modified Coulomb potential with $A=1$, $a^2=0.5$, $L=20$ and $n=19$.}
\label{tab:appB-2}
\scalebox{1.0}[1.0]{
\begin{tabular}{l|ccccc}
\hline
CNOT count & $\delta=10^{-1}$ & $\delta=10^{-2}$ & $\delta=10^{-3}$ & $\delta=10^{-4}$ & $\delta=10^{-6}$ \\
\hline
Uniform linear$^{[a]}$ & 20538 & 54610 & 350546 & 2059282 & 26311098 \\
Linear$^{[b]}$ & 3586 & 10750 & 47334 & 216290 & 3326026 \\
Quadratic$^{[c]}$ & 11584 & 25752 & 52484 & 113504 & 638948 \\
Cubic$^{[d]}$ & 239908 & 366352 & 579900 & 1009100 & 2749516 \\
Cubic Hermite$^{[e]}$ & 154996 & 239908 & 368120 & 622256 & 2001456 \\
Optimized degree-varying$^{[f]}$ & 3586 & 10750 & 44790 & 121002 & 1015862 \\
\hline
\end{tabular}
}
\begin{flushleft}
\footnotesize $[a]$ Algorithm 1 without Step 3 for $p=1$ and $C_1=1/8$ (linear interpolation \cite{Agarwal.1991, Hall.1976}).  

$[b]$ Algorithm 1 for $p=1$ and $C_1=1/8$ (linear interpolation \cite{Agarwal.1991, Hall.1976}). 

$[c]$ Algorithm 1 for $p=2$ and $C_2=2/81$ (two-point Hermite interpolation \cite{Agarwal.1991}). 

$[d]$ Algorithm 1 for $p=3$ and $C_3=5/384$ (cubic spline \cite{Hall.1976}).

$[e]$ Algorithm 1 for $p=3$ and $C_3=1/384$ (cubic Hermite spline \cite{Agarwal.1991, Hall.1976}).

$[f]$ Algorithm 2 for $\mathcal{P}=\{1,2,3\}$, $\mathcal{M}=\{2,\ldots,n\}$ and $C_1=1/8$, $C_2=2/81$, $C_3=1/384$ with $n=19$. Here, the objective function is set to be CNOT count corresponding to PPP (see Appendix \ref{subsec:appA3}). 
\end{flushleft}
\end{table}
With a fixed grid parameter $n=19$, Table \ref{tab:appB-2} confirms the better performance of the non-uniform spline beyond the uniform one and that the cubic Hermite spline yields fewer CNOT gates than the cubic spline of type I/type II. However, due to the polynomial factor $n^p$ (exponentially increasing for $p$) in gate count, it is not correct that the higher degree we choose, the better performance we gain. The best polynomial degree in Algorithm 1 depends on $n$, $\delta$, $V$, and the spline method one uses. In the above specific case, the quadratic spline is the best for sufficiently small $\delta$. 
A slightly involved issue is the comparison between Algorithms 1 and 2. Algorithm 2 gives a reasonable spline without knowledge of degree $p$, but it does not outperform Algorithm 1 sometimes. 
We choose the smallest possible degree in each sub-interval in Algorithm 2 to reduce the difficulty of solving the minimization problem, which prevents us from the trade-off by increasing the degree and decreasing the total number of sub-intervals. 
Of course, Algorithm 2 with $\mathcal{P} = \{p_0\}$ outperforms Algorithm 1 for a given degree $p_0$ because it addresses the trade-off by increasing the division parameter $m$ and decreasing the total number of sub-intervals. 

On the other hand, we give a remark on the optimal pre-constants in spline interpolations. According to the spline method one uses, such constant varies and can be found in \cite{Agarwal.1991, Hall.1976, Dubeau.1996}. The word ``optimal" means that the error estimate (i.e., the second inequality in \eqref{app:err-est}) would become equality in the worst case. The optimality is closely related to the asymptotic behavior as $\delta\to 0$, which we illustrate by listing the numerical maximum errors in Table \ref{tab:appB-3}. 
We find that the maximum error between the spline interpolation and the original function is larger than $\delta/2$ as $\delta\le 0.01$ and tends to $\delta$ as $\delta\to 0$, which indicates that our algorithm based on these theoretical optimal constants would be efficient in high-precision cases. In contrast, it may give higher precision than desired for a large given $\delta$ at the cost of using more resources.  
\begin{table}[htb]
\centering
\caption{Numerical maximum error by different splines with respect to $\delta$ 
for modified Coulomb potential with $A=1$, $a^2=0.5$, $L=20$.}
\label{tab:appB-3}
\scalebox{0.77}[0.77]{
\begin{tabular}{l|ccccccc}
\hline
Maximum error & $\delta=10^{-1}$ & $\delta=10^{-2}$ & $\delta=10^{-3}$ & $\delta=10^{-4}$ & $\delta=10^{-5}$ & $\delta=10^{-6}$ & $\delta=10^{-7}$ \\
\hline
Linear$^{[a]}$ & $0.403\times 10^{-1}$ & $0.811\times 10^{-2}$ & $0.973\times 10^{-3}$ & $0.980\times 10^{-4}$ & $0.991\times 10^{-5}$ & $0.997\times 10^{-6}$ & $0.999\times 10^{-7}$ \\
Quadratic$^{[b]}$ & $0.382\times 10^{-1}$ & $0.797\times 10^{-2}$ & $0.718\times 10^{-3}$ & $0.906\times 10^{-4}$ & $0.926\times 10^{-5}$ & $0.962\times 10^{-6}$ & $0.985\times 10^{-7}$ \\
Cubic Hermite$^{[c]}$ & $0.167\times 10^{-1}$ & $0.535\times 10^{-2}$ & $0.772\times 10^{-3}$ & $0.664\times 10^{-4}$ & $0.691\times 10^{-5}$ & $0.827\times 10^{-6}$ & $0.952\times 10^{-7}$ \\
\hline
\end{tabular}
}
\begin{flushleft}
\footnotesize 
$[a]$ Algorithm 1 for $p=1$ and $C_1=1/8$ (linear interpolation \cite{Agarwal.1991, Hall.1976}). 

$[b]$ Algorithm 1 for $p=2$ and $C_2=2/81$ (two-point Hermite interpolation \cite{Agarwal.1991}). 

$[c]$ Algorithm 1 for $p=3$ and $C_3=1/384$ (cubic Hermite spline \cite{Agarwal.1991, Hall.1976}).
\end{flushleft}
\end{table}

\subsection{Relation between numerically calculated number of sub-intervals and its theoretical lower bound}
\label{subsec:appB2}

We check the relation between the number of sub-intervals $\tilde M_p$ and its theoretical lower bound $\hat M_{p}$ for the polynomial degree $p=1,2,3$, the precision $\delta>0$ and two different functions. 

The first function is the modified Coulomb potential in Example \ref{exa:1} of Sect.~\ref{subsec:3-2}. 
Let $A=1$ and $L=20$ be constants. We consider several values of parameter $a^2$. 
For a given polynomial degree $p$, recall that $\tilde M_p$ is the number of sub-intervals obtained in Step 3 of Algorithm 1, while $\hat M_{p}$ is its theoretical lower bound defined in Appendix \ref{subsec:appA3}. We plot the ratio $\tilde M_p/\hat M_p$ as a function of $\delta$ under the different choices of $a^2=0.5, 0.1, 0.001$ in Fig.~\ref{appB:Fig1}. 
\begin{figure}
\centering
\resizebox{15cm}{!}{
\includegraphics[keepaspectratio]{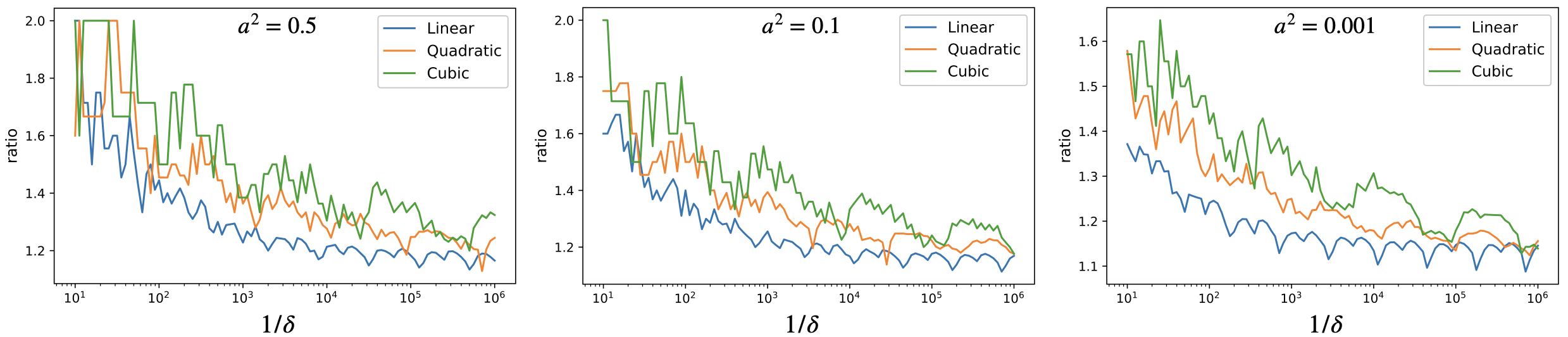}
}
\caption{Ratio between the number of sub-intervals $\tilde M_p$ by Algorithm 1 and its lower bound $\hat M_p$ against the reciprocal of precision $1/\delta$ for the modified Coulomb potential with fixed $A=1$, $L=20$ and varying parameter $a^2=0.5, 0.1, 0.001$. In each subplot, we demonstrate the change of the ratios for different spline interpolations as $\delta$ decreases. } 
\label{appB:Fig1}
\end{figure}
In the calculations of $\tilde M_p$ and $\hat M_p$, we use $C_1=1/8$, $C_2=2/81$, and $C_3=1/384$, which corresponds to the two-point Hermite interpolations \cite{Agarwal.1991}. 
Besides, $\hat M_{p}$ is calculated by numerical integration (a Riemann sum) for given $p$ and $\delta$. 

The second function is an oscillating one in Example \ref{exa:2} of Sect.~\ref{subsec:3-2}. 
Let $A=1$, $\omega=2$, and $L=20$ be constants. Again, we plot the ratio $\tilde M_p/\hat M_p$ as a function of $\delta$ under the different choices of $a=1, 0.1, 0.001$ in Fig.~\ref{appB:Fig2}. 
\begin{figure}
\centering
\resizebox{15cm}{!}{
\includegraphics[keepaspectratio]{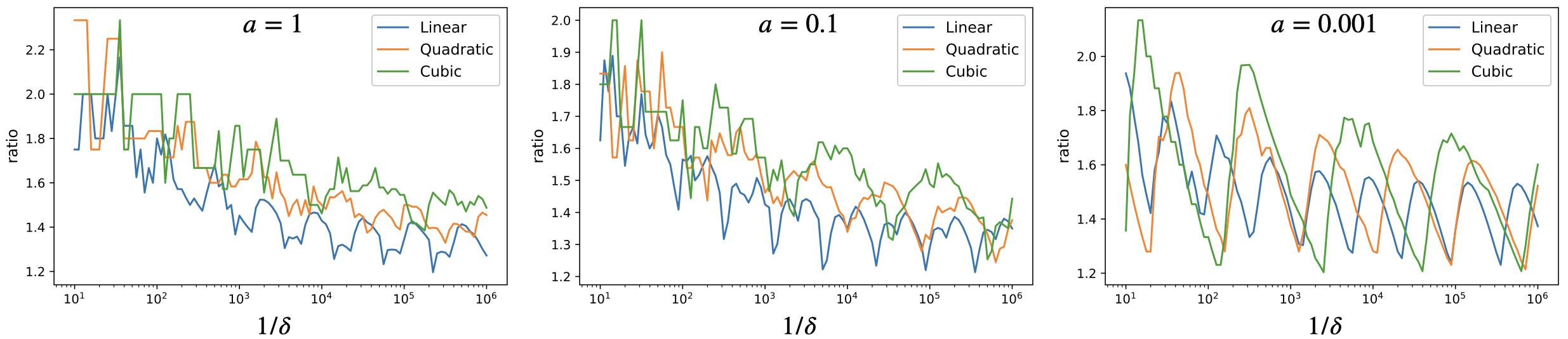}
}
\caption{Ratio between the number of sub-intervals $\tilde M_p$ by Algorithm 1 and its lower bound $\hat M_p$ against the reciprocal of precision $1/\delta$ for an oscillating function with fixed $A=1$, $\omega=2$, $L=20$ and varying parameter $a=1, 0.1, 0.001$. In each subplot, we demonstrate the change of the ratios for different spline interpolations as $\delta$ decreases. } 
\label{appB:Fig2}
\end{figure}
Figs.~\ref{appB:Fig1}--\ref{appB:Fig2} show that for sufficiently small $\delta$, the lower bound $\hat M_{p}$ gives a good estimation of the number of intervals $\tilde M_p$ up to a constant $c$ (i.e., $\tilde M_p\le c\hat M_{p}$). Moreover, although such a constant varies for different functions and precision bounds, it is smaller than $7/3$ and becomes closer to $1$ for smaller $\delta$.

\subsection{Analytical comparison of WAL, LIU, and PPP}
\label{subsec:appB4}

In the case study, we find that PPP with $p=2$ uses fewer CNOT gates than PPP with $p=3$ in practical settings. Here, we clarify when such observation is valid under some assumptions. 
We assume $V\in C^{q}[0, L]$ for some $q\in \mathbb{N}$ and define $\tilde V_p(x):=(C_p|V^{(p+1)}(x)|)^{1/(p+1)}$ for $x\in [0, L]$ and $p=0,1,\ldots,q-1$, where $C_p$ is the optimal constant for spline interpolations \cite{Agarwal.1991}. For each $p$, we introduce several constants:
$$
C_{p,1} := \|\tilde V_{p}\|_{L^1(0,L)} = \int_0^L |\tilde V_{p}(x)| dx, \quad C_{p,\infty} := L \|\tilde V_{p}\|_{L^\infty(0,L)} = L \|\tilde V_p\|_{\infty}.
$$
Using these constants, we can express $m_p$ and $\hat M_p$ in Appendix \ref{subsec:appA3} by 
\begin{equation}
\label{app:eq-B1}
m_p = \left\lceil\log_2 \left(C_{p,\infty}\delta^{-1/(p+1)}\right)\right\rceil, \quad \hat M_p = \lceil C_{p,1}\delta^{-1/(p+1)}\rceil.
\end{equation}
We note that $C_{p,1}=C_{p,\infty}=0$ if and only if $V$ is a $p$-th degree polynomial in $[0, L]$. In such trivial cases, we can take $\tilde M_p=1$, and the implementation by PPP is precise and exhibits a lower gate count compared to WAL. Therefore, we mainly consider the cases that $C_{p,1}$ and $C_{p,\infty}$ do not vanish. From Eq.~\ref{app:eq-B1} and the facts that $C_{p,1}, C_{p,\infty}$ are independent of $\delta$, we find that $\tilde M_p\ge \hat M_p=\mathcal{O}(\delta^{-1/(p+1)})$ for sufficiently small $\delta$ provided that $C_{p,1}$ is bounded from below, which implies that for a given function $V$ and a sufficiently small $\delta$, a non-uniform division has a similar performance as a uniform division regarding $\delta$.  
According to the discussion in Appendix \ref{subsec:appB2}, we further assume that there exists a $\delta$-independent constant $c\ge 1$ such that $\hat M_p\le \tilde M_p\le c\hat M_p$ holds for any $p\in \mathbb{N}$. Thus, we have
$$
\tilde C := \frac{\tilde M_2}{\tilde M_3}\delta^{1/12}\in \left[c^{-1}\frac{C_{2,1}}{C_{3,1}}, c\frac{C_{2,1}}{C_{3,1}}\right].
$$
This indicates that $\tilde C$ is upper and lower bound by some constants that are independent of $\delta$. Using the analytic CNOT count for each method, for a fixed precision $\delta>0$, we obtain 

\noindent -- $CC_{\text{WAL}}(n) = 2^{n}-2$, $n\in [1, m_0]$,

\noindent -- $CC_{\text{LIU}}(n) = 2(n-m_1)(2^{m_1}+2m_1^2-2)+2^{m_1}-2$, $n\in [m_1, m_0]$,

\noindent -- $CC_{\text{mLIU}}(n) = (n-m_1)(3\cdot 2^{m_1}-4)+2^{m_1}-2$, $n\in [m_1, m_0]$,

\noindent -- $CC_{\text{PPP1}}(n) = (2n+8m_1^2)(\tilde M_1-1)$, $n\in [m_1, m_0]$,

\noindent -- $CC_{\text{PPP2}}(n) = n(n-1)+(2n+4n(n-1)+8m_2^2)(\tilde M_2-1)$, $n\in [m_2, m_0]$,

\noindent -- $CC_{\text{PPP3}}(n) = n(n-1)+4n(n-1)(n-2)/3+(2n+4n(n-1)+10n(n-1)(n-2)/3+8m_3^2)(\tilde M_3-1)$, $n\in [m_3, m_0]$.

\noindent First, we compare PPP with different polynomial degrees. Define the difference of CNOT count between PPP with $p=2$ and PPP with $p=3$ by
\begin{align*}
g(n) &:= CC_{\text{PPP2}}(n)-CC_{\text{PPP3}}(n)\\
&= (2n+4n(n-1)+8m_2^2)(\tilde M_2-\tilde M_3) -(8m_3^2-8m_2^2+\frac{10}{3}n(n-1)(n-2))(\tilde M_3-1) - \frac{4}{3}n(n-1)(n-2)\\
&= -\frac{1}{3}(10\tilde M_3-6)n^3 + (6\tilde M_3+4\tilde M_2-6)n^2 -\frac{1}{3}(14\tilde M_3+6\tilde M_2-12)n + 8m_2^2(\tilde M_2-1)-8m_3^2(\tilde M_3-1).
\end{align*}
Recall that $L/2^{m_p}$ and $\tilde M_p$ are the size of the sub-interval and the number of sub-intervals to guarantee the precision $\delta$ regarding $p$-th degree spline interpolation. It is natural to assume 
\begin{align}
\label{app:assump1}
m_1\ge m_2\ge m_3 \ge 1 \quad \mbox{ and }\quad \tilde M_2\ge \tilde M_3\ge 1,
\end{align} 
which implies $g(0)>0$. 
We calculate the derivative of $g$ with respect to $n$: 
\begin{align*}
g^\prime(n) &= -(10\tilde M_3-6)n^2 + (12\tilde M_3+8\tilde M_2-12)n -(14\tilde M_3+6\tilde M_2-12)/3.
\end{align*}
By assumption \eqref{app:assump1}, we can check
$$
(12\tilde M_3+8\tilde M_2-12)^2-4(10\tilde M_3-6)(14\tilde M_3+6\tilde M_2-12)/3>0.
$$
Then, letting $g^\prime(n)=0$, we have $g(n)$ takes its local minimum/maximum at 
\begin{align*}
n_{\pm}^\ast &= \frac{3\tilde M_3+2\tilde M_2-3 \pm \sqrt{(3\tilde M_3+2\tilde M_2-3)^2-(5\tilde M_3-3)(7\tilde M_3+3\tilde M_2-6)/3}}{5\tilde M_3-3}\\
&= \frac{(2\tilde C\delta^{-1/12}+3)\tilde M_3-3 \pm \sqrt{((2\tilde C\delta^{-1/12}+3)\tilde M_3-3)^2-(5\tilde M_3-3)((3\tilde C\delta^{-1/12}+7)\tilde M_3-6)/3}}{5\tilde M_3-3}.
\end{align*}
Therefore, $g(n)<0$ for $n\in [m_1, m_0]$ if and only if $n_+^\ast\le m_1$ and $g(m_1)<0$ or $n_+^\ast> m_1$ and $g(\min\{n_+^\ast, m_0\})<0$. 
By noting that $n_+^\ast$, $m_1$ depends on $\delta$, if we have
$$
\delta\in \Delta_V := \{\delta>0; n_+^\ast(\delta)\le m_1(\delta), g(m_1(\delta))<0\} \cup \{\delta>0; n_+^\ast(\delta)> m_1(\delta), g(\min\{n_+^\ast(\delta), m_0(\delta)\})<0\}, 
$$
then we obtain $g(n)<0$ for any $n\in [m_1(\delta), m_0(\delta)]$. 

Second, we compare WAL and PPP with $p=2$ in $n\in (0, m_1]$. We define the difference of CNOT count between these two methods by 
\begin{align*}
\bar g(n) &:= CC_{\text{WAL}}(n) - CC_{\text{PPP2}}(n)\\
&= 2^n-2-n(n-1)-(4n^2-2n+8m_2^2)(\tilde M_2-1). 
\end{align*}
We can verify $\bar g(0)<0$. We calculate the derivative of $\bar g$ with respect to $n$: 
\begin{align*}
\bar g^\prime(n) &= (\ln 2) 2^n -2n+1-(8n-2)(\tilde M_2-1)\\
&= (\ln 2) 2^n - (8\tilde M_2-6)n + (2\tilde M_2-1).
\end{align*}
Let $n_1^\ast$ and $n_2^\ast$ be the solutions to $\bar g^\prime(n) = 0$. Then, we can verify that $\bar g(n_1^\ast)<0$ and $\bar g(n_2^\ast)<0$ provided that $m_2\ge 2$ (i.e., $\delta<C_{2,\infty}^3/8$). 
Therefore, $\bar g(n)<0$ for $n\in (0,m_1]$ if $m_2\ge 2$ and $\bar g(m_1)<0$. If we define 
$$
\bar \Delta_V := \{0<\delta<C_{2,\infty}^3/8; \bar g(m_1(\delta))<0\}, 
$$
then we obtain $\bar g(n)<0$ for any $n\in (0, m_1(\delta)]$ provided that $\delta\in \bar \Delta_V$. 

Third, we also check the difference between LIU and WAL by defining
\begin{align*}
\tilde g(n) &:= CC_{\text{LIU}} - CC_{\text{WAL}}\\
&= 2(n-m_1)(2^{m_1}+2m_1^2-2)+2^{m_1}-2^n, \quad n\in [m_1, \infty).
\end{align*}
It is clear that $\tilde g(m_1)=0$, and we calculate the derivative
$$
\tilde g^\prime(n) = 2(2^{m_1}+2m_1^2-2)-(\ln 2) 2^n.
$$
Since $\tilde g^\prime(m_1)=(2-\ln 2)2^{m_1}+4m_1^2-4>0$ and $\tilde g^\prime(n)=0$ has only one solution, we conclude that $\tilde g(n)=0$ has two solutions $n_3^\ast=m_1$ and $n_4^\ast>m_1$. Therefore, if $\delta\in \tilde \Delta_V :=\{\delta>0; m_0(\delta)> n_4^\ast(\delta)\}$, then $\tilde g(n)\le 0$ for $n\in [n_4^\ast,m_0]$. In particular, we have $\tilde g(m_0)\le 0$. 

The above discussion explains the reason why we observe (i) and (iii) in Sect.~\ref{subsec:3-2}. By the definition of $\Delta_V$, $\bar\Delta_V$, and $\tilde \Delta_V$, we find that they depend only on the function $V$. For the given functions in Sect.~\ref{subsec:3-2}, we list these sets in Table \ref{tab:appB4-1} by numerical calculations. 
\begin{table}[htb]
\centering
\caption{Numerical evaluations of $\Delta_V$, $\bar\Delta_V$, and $\tilde \Delta_V$ for the functions in Sect.~\ref{subsec:3-2}. }
\label{tab:appB4-1}
\scalebox{0.59}[0.59]{
\begin{tabular}{l|lll}
\hline
 & $\Delta_V\cap \mathcal{A}_\Delta$ & $\bar\Delta_V\cap \mathcal{A}_\Delta$ & $\tilde\Delta_V\cap \mathcal{A}_\Delta$ \\
\hline
Ex. 1: $a^2=0.5$ & $(1\times 10^{-9},0.1)$ & $(1\times 10^{-9},0.1)$ & $(1\times 10^{-9},7.5\times 10^{-3})\cup (8.6\times 10^{-3}, 1.5\times 10^{-2})$ \\
Ex. 1: $a^2=0.1$ & $(1\times 10^{-9},0.1)$ & $(1\times 10^{-9},0.1)$ & $(1\times 10^{-9},1.9\times 10^{-2})\cup (2.4\times 10^{-2}, 3.8\times 10^{-2})$ \\
Ex. 1: $a^2=0.001$ & $(1\times 10^{-9},0.1)$ & $(1.4\times 10^{-9},4.0\times 10^{-9})\cup (5.6\times 10^{-9},0.1)$ & $(1\times 10^{-9},0.1)$ \\
Ex. 2: $a=1$ & $(1\times 10^{-9},0.1)$ & $(1\times 10^{-9},0.1)$ & $(1\times 10^{-9},3.4\times 10^{-2})$ \\
Ex. 2: $a=0.1$ & $(1\times 10^{-9},0.1)$ & $(1\times 10^{-9},0.1)$ & $(1\times 10^{-9},3.7\times 10^{-2})$ \\
Ex. 2: $a=0.001$ & $(1\times 10^{-9},5.8\times 10^{-9})\cup (7.9\times 10^{-9},1.2\times 10^{-8})$ & $(1\times 10^{-9},0.1)$ & $(1\times 10^{-9},3.9\times 10^{-2})$ \\
 & $\cup (1.3\times 10^{-8},9.1\times 10^{-8})\cup (1.1\times 10^{-7},0.1)$ &  &  \\
\hline
\end{tabular}
}
\end{table}
Here, $\mathcal{A}_\Delta:=(1\times 10^{-9}, 0.1)$ denotes a set of $\delta$. The minimal value is set as $1\times 10^{-9}$ because this is sufficiently small for practical applications, and it is very expensive for a classical computer to calculate a finer division than $2^{30}\approx 10^{9}$ sub-intervals in Algorithm 1 or 2. In contrast, although the estimation of $\tilde M_p$ by Algorithm 1 or 2 is time-consuming for small $\delta<1\times 10^{-9}$ on a classical computer, one can use the easily calculated theoretical bound $\hat M_p$ to obtain some subsets of $\Delta_V$, $\bar\Delta_V$, and $\tilde \Delta_V$ instead. 
From the first two columns of Table \ref{tab:appB4-1}, we verify that PPP with $p=2$ outperforms PPP with $p=3$ for $m_1 \le n\le m_0$, and WAL outperforms PPP for $0<n\le m_1$ provided that $1.1\times 10^{-7}< \delta < 0.1$.

\subsection{Comparison of methods on CNOT count/depth by Qiskit emulation}
\label{subsec:appB3}

In Sect.~\ref{sec:comp}, we gave a detailed comparison of WAL, LIU, and PPP on CNOT count by using analytical estimation for each method. 
In practice, such analytical evaluations deliver the corresponding theoretical upper bounds in the worst case. For example, we did not consider the possibility of canceling CNOT gates. 
In this subsection, we calculate the CNOT count and circuit depth of each method for several examples by using a quantum emulator (Qiskit \cite{Qiskit2023}). 
We check whether a hidden function-independent improvement of each method exists (specifically in CNOT count) and compare the circuit depth based on the native gates of the IBM quantum computer. 

First, we study the modified Coulomb potential as Example \ref{exa:1} with parameters $A=1$, $a^2=0.1$, and $L=20$ under a relatively high precision $\delta=0.001$ and a lower one $\delta=0.1$. 
For $n=m_1,m_1+1,\ldots,m_0$, the CNOT counts and depths for WAL, LIU, and PPP are demonstrated in Fig.~\ref{appB:Fig3}. 
\begin{figure}
\centering
\resizebox{13cm}{!}{
\includegraphics[keepaspectratio]{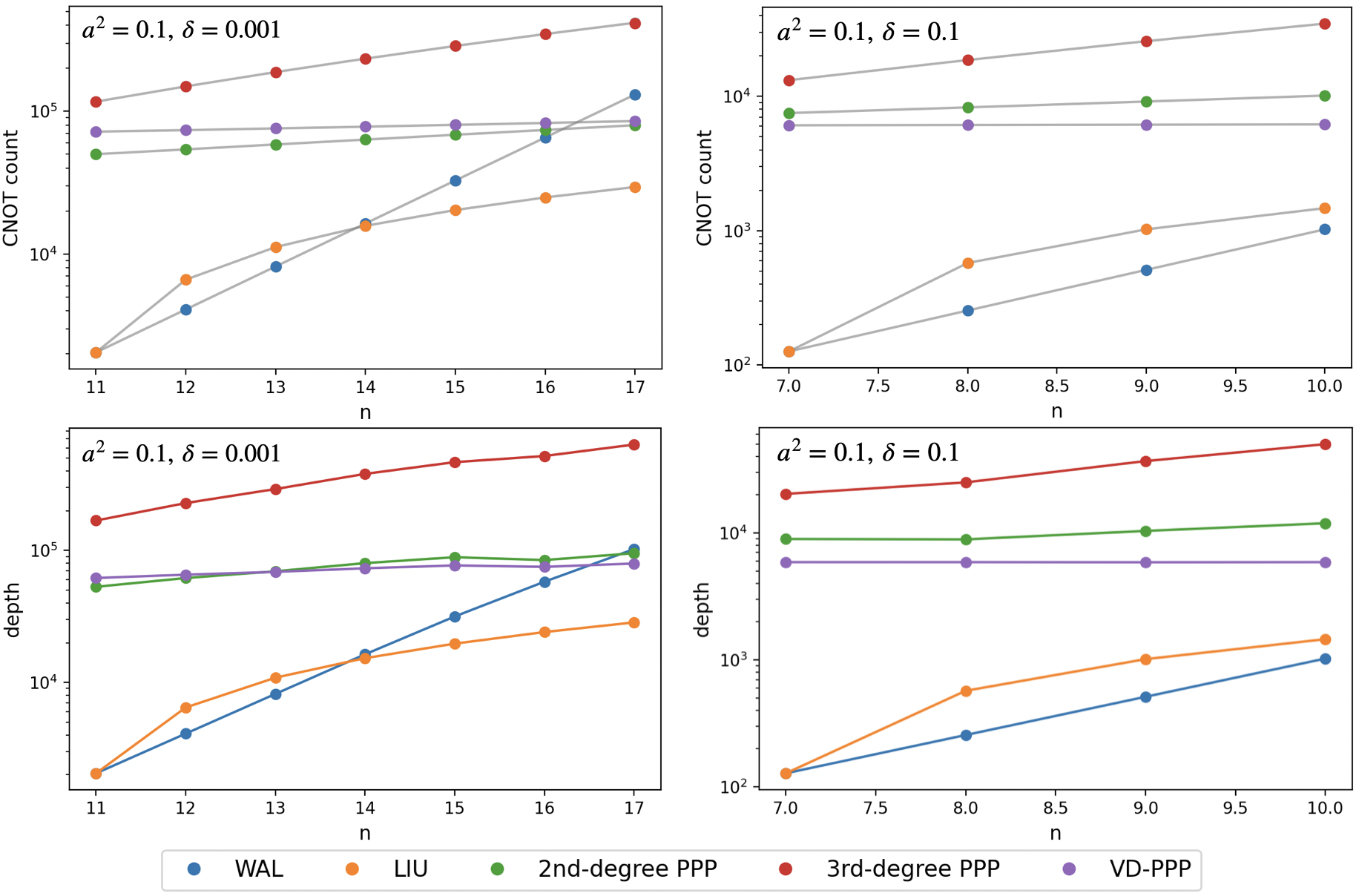}
}
\caption{CNOT count/depth by Qiskit with respect to $n$ for the modified Coulomb potential. The parameters $A=1$, $a^2=0.1$, and $L=20$ are constants. The first column corresponds to a relatively high precision $\delta=0.001$, while the second corresponds to a relatively low precision $\delta=0.1$. The gray lines in the subplots of the CNOT count are the guidelines of the theoretical CNOT count for each method discussed in Sect.~\ref{sec:comp}. 2nd-degree PPP and 3rd-degree PPP denote PPP by Algorithm 1 for $p=2$ and $p=3$, respectively, while VD-PPP denotes PPP by Algorithm 2 for $\mathcal{P}=\{1,2,3\}$ and $\mathcal{M}=\{2,\ldots, n\}$ with the objective function taking the theoretical CNOT count. } 
\label{appB:Fig3}
\end{figure}
Except for PPP with $p=3$ (i.e., cubic Hermite spline), the theoretical CNOT count coincides with the CNOT count by Qiskit after decomposition into the native gates. Although it is possible to cancel a small part of CNOT gates when a high-degree spline is employed, in principle, it is sufficient to discuss the theoretical CNOT count. 
On the other hand, Fig.~\ref{appB:Fig3} also shows the circuit depth after decomposition into the native gates using Qiskit \cite{Qiskit2023}. For the lower precision $\delta=0.1$, WAL is the best one among the methods. In contrast, for the higher precision $\delta=0.001$, WAL is the best method for $n\le 13$, and LIU is the best for $n\ge 14$, which is the same observation as for CNOT count. 

We also investigate a damped oscillating function as Example \ref{exa:2} with parameters $A=1$, $a=0.1$, $\omega=2$, and $L=20$ under a relatively high precision $\delta=0.001$ and a lower one $\delta=0.1$. 
We find a similar observation as Example \ref{exa:1} in Fig.~\ref{appB:Fig4}, but small gaps between the theoretical CNOT counts and the CNOT counts by Qiskit for several methods, including 3rd-degree PPP. We believe these gaps come from the possible vanishing of some phase gates regarding a specific function, and it does not influence the comparison result whether we use the theoretical gate counts or the ones by Qiskit emulation. 
\begin{figure}
\centering
\resizebox{13cm}{!}{
\includegraphics[keepaspectratio]{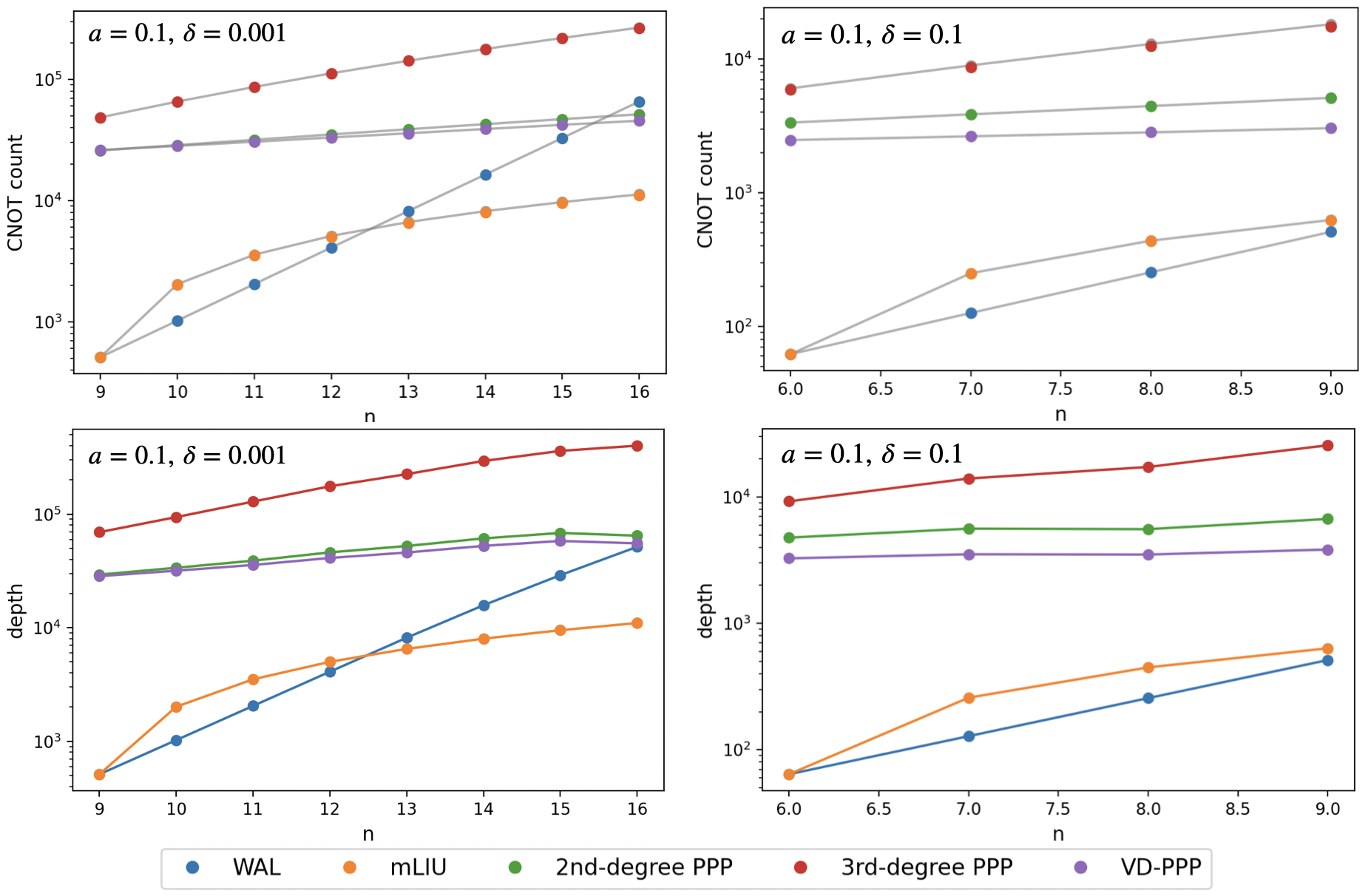}
}
\caption{CNOT count/depth by Qiskit with respect to $n$ for a damped oscillating function. The parameters $A=1$, $a=0.1$, $\omega=2$, and $L=20$ are constants. The first column corresponds to a relatively high precision $\delta=0.001$, while the second corresponds to a relatively low precision $\delta=0.1$. The gray lines in the subplots of the CNOT count are the guidelines of the theoretical CNOT count for each method discussed in Sect.~\ref{sec:comp}. 2nd-degree PPP and 3rd-degree PPP denote PPP by Algorithm 1 for $p=2$ and $p=3$, respectively, while VD-PPP denotes PPP by Algorithm 2 for $\mathcal{P}=\{1,2,3\}$ and $\mathcal{M}=\{2,\ldots, n\}$ with the objective function taking the theoretical CNOT count. } 
\label{appB:Fig4}
\end{figure}

If we focus on the circuit depth in Figs.~\ref{appB:Fig3}--\ref{appB:Fig4}, then we find an interesting observation that the depth for 2nd-degree PPP has a relatively small value when $n=2^k$ for some $k\in \mathbb{N}$. Since this is not observed for CNOT count, it indicates an efficient parallelization of 2nd-degree PPP as $n=2^k$. 

\section{Application to first-quantized Hamiltonian simulation}
\label{sec:appC}

\subsection{Linear depth for the part of two-particle interaction potential}
\label{subsec:appC0}

We prove that in the first-quantized Hamiltonian simulation for $N_e$ particles, the circuit depth for the real-time evolution of the two-particle interaction potential is $O(N_e)$, though the number of interaction potentials is $C^{N_e}_2 = \frac{1}{2}N_e(N_e-1)$. 
In fact, we can also include the one-particle potential and keep the linear depth with respect to the number of particles. 

We characterize the one-particle and two-particle potentials by unordered pairs $\{(i,j)\}_{i,j=1,\ldots,N_e}$. Here, $(i,j)$ describes the interaction potential between the $i$-th particle and the $j$-th particle, and indicates the one-particle potential for the $i$-th particle if $j=i$. 

In this formulation, our target is to seek $N_e$ sets, the pairs are implemented simultaneously in each set, such that they includes all the one-particle and two-particle potentials once and only once and any particle is referred only once in each set. 

Without geometric constraints, the construction of the sets are not unique in general. Here, we provide only one candidate, which can be readily verified. We introduce the sets for an odd number, and then we use the introduced sets to define the ones for an even number. 
Henceforth, we denote the number of particles by $N$ for simplicity. 
\begin{itemize}
\item For $N = 2n-1$, $n\in \mathbb{N}$, define the $k$-th set by 
$$
[k]_{N}^{\text{odd}} := \left\{([k-j]_N,[k+j]_N): j=0,\ldots,n-1\right\}, \quad k=1,\ldots,N.
$$

\item For $N = 2n$, $n\in \mathbb{N}$, define the $k$-th set by 
\begin{align*}
[k]_{N}^{\text{even}} &:= [k]_{N-1}^{\text{odd}} \setminus \{(k,k)\} \cup \{(k,N)\}, \quad k=1,\ldots,N-1, \\
[N]_{N}^{\text{even}} &:= \{(j,j): j=1,\ldots,N\}. 
\end{align*}
\end{itemize}
Here, $[p]_N\in \{1,\ldots,N\}$ denotes $p \mod N$. For the convenience, we define $[N]_N = N$ instead of $[N]_N = 0$. By the definition of the sets, it is clear that for $N=2n-1$, $n\in \mathbb{N}$, each set includes $n$ pairs, and hence, $2n$ numbers are referred in total. 
In order to achieve our target, we need to verify the properties in the previous paragraph. In other words, we prove the following theorem. 
\begin{theorem}
Let $N=2n-1$, $n\in \mathbb{N}$. 

\noindent $\mathrm{(i)}$ In each $[k]_{N}^{\text{odd}}$, $k=1,\ldots,N$, except for the pair $(k,k)$, any number in $\{1,\ldots,N\}\setminus\{k\}$ appears once and only once. This implies that different elements in $[k]_{N}^{\text{odd}}$ do not share a common number. 

\noindent $\mathrm{(ii)}$ For $k, k^\prime=1,\ldots,N$ and $k\not=k^\prime$, we have $[k]_N^{\text{odd}} \cap [k^\prime]_N^{\text{odd}} = \emptyset$. 
\end{theorem}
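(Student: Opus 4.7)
The plan is to exploit modular arithmetic, with the key structural fact that $N = 2n-1$ is odd, so $2$ is invertible modulo $N$. I will verify both claims by direct case analysis on the indices $j \in \{0, 1, \ldots, n-1\}$.

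For part (i), I would first separate out the pair $(k,k)$ corresponding to $j=0$, leaving $n-1$ pairs from $j = 1, \ldots, n-1$ that together contain $2(n-1) = N-1$ numbers (counted with multiplicity). Since $|\{1,\ldots,N\}\setminus\{k\}| = N-1$, it suffices to show that these $N-1$ numbers are pairwise distinct and that none of them equals $k$. Both properties reduce to elementary congruences modulo $N$: (a) $[k \pm j]_N = k$ would force $j \equiv 0 \pmod N$, impossible for $1 \le j \le n-1 < N$; (b) $[k - j]_N = [k - j']_N$ (or the analogous $+$ case) forces $j \equiv j' \pmod N$, hence $j = j'$; (c) $[k-j]_N = [k+j']_N$ forces $j + j' \equiv 0 \pmod N$, but $j + j' \in \{2, \ldots, 2n-2\} = \{2, \ldots, N-1\}$, which rules this out. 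Together these give the desired bijection onto $\{1,\ldots,N\}\setminus\{k\}$.

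For part (ii), I would argue by contradiction: suppose an unordered pair belongs to both $[k]_N^{\text{odd}}$ and $[k']_N^{\text{odd}}$, so that
\begin{equation*}
\{[k-j]_N, [k+j]_N\} = \{[k'-j']_N, [k'+j']_N\}
\end{equation*}
for some $j, j' \in \{0, \ldots, n-1\}$. There are two matching orderings. In either case, summing the two component-wise equalities (taken mod $N$) yields $2k \equiv 2k' \pmod N$, and because $N$ is odd one can cancel the $2$ to conclude $k \equiv k' \pmod N$, hence $k = k'$, contradicting $k \neq k'$.

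The only place where any subtlety arises is in making sure to treat the unordered pair as two genuine orderings in part (ii), and in part (i) in handling the three congruence cases uniformly; I expect no real obstacle, since all bounds work out cleanly because $N = 2n-1$ makes both $N-1 = 2(n-1)$ and the invertibility of $2$ modulo $N$ hold simultaneously. The even case stated in the paper would then follow as a corollary by inspecting the modification that replaces each $(k,k)$ with $(k,N)$ and isolates all self-pairs in $[N]_N^{\text{even}}$, though that is beyond the present statement.
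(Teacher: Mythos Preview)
Your proposal is correct. Part (ii) is essentially identical to the paper's argument: both proceed by contradiction, consider the two possible matchings of the unordered pair, add the two congruences to obtain $2k \equiv 2k' \pmod N$, and then use that $N$ is odd to cancel the factor of $2$.

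For part (i) your approach is dual to the paper's. The paper notes, as you do, that by counting it suffices to prove either ``every $\ell \in \{1,\ldots,N\}$ appears'' or ``all $N-1$ non-$k$ entries are distinct''; it then proves \emph{surjectivity} by exhibiting, for each $\ell$, an explicit index $j_{k\ell} \in \{0,\ldots,n-1\}$ with $[k \pm j_{k\ell}]_N = \ell$, via a four-case formula depending on the sign and size of $k-\ell$. You instead prove \emph{injectivity} by checking the three congruences (a)--(c) directly. Your route avoids the explicit case analysis and is a bit cleaner, while the paper's route has the minor advantage of producing the index $j_{k\ell}$ constructively. Both rely on the same structural facts ($N = 2n-1$ so that $2(n-1) = N-1$ and $\gcd(2,N)=1$), so the difference is one of presentation rather than substance.
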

\begin{proof}
\noindent $\mathrm{(i)}$ Note that for each $k=1,\ldots,N$, the set $[k]_N^{\text{odd}}$ contains exact $n$ pairs and $2n$ numbers including $k$ twice. Thus, ``any number in $\{1,\ldots,N\}$ appears" is equivalent to ``any number in $\{1,\ldots,N\}\setminus\{k\}$ appears once and only once". 
Then, it is sufficient to prove that for each $\ell=1,\ldots,N$, and $k=1,\ldots,N$, there exists $j_{k\ell}\in \{0,\ldots,n-1\}$ such that either $[k-j_{k\ell}]_N = \ell$ or $[k+j_{k\ell}]_N = \ell$. 
This is verified by the following explicit choice of $j_{k\ell}$: 
$$
j_{k\ell} = \left\{
\begin{aligned}
& k-\ell, && 0\le k-\ell \le n-1, \\
& N-(k-\ell), && n\le k-\ell \le 2n-2, \\
& \ell-k, && -n+1\le k-\ell \le -1, \\
& N-(\ell-k), && -2n+1 \le k-\ell \le -n.
\end{aligned}
\right.
$$

\noindent $\mathrm{(ii)}$ We prove by contradiction. Without loss of generality, we assume that there exists two integers $k>k^\prime$ such that $[k]_N^{\text{odd}} \cap [k^\prime]_N^{\text{odd}} \not= \emptyset$. According to the definition of the sets, we have either
$$
\left\{
\begin{aligned}
& [k-j]_N = [k^\prime-j^\prime]_N, \\
& [k+j]_N = [k^\prime+j^\prime]_N,
\end{aligned}
\right.
\quad\text{or }\quad
\left\{
\begin{aligned}
& [k-j]_N = [k^\prime+j^\prime]_N, \\
& [k+j]_N = [k^\prime-j^\prime]_N,
\end{aligned}
\right.
$$
for some integers $j,j^\prime=0,\ldots,n-1$. In both cases, we obtain 
$$
[k-j]_N + [k+j]_N = [k^\prime+j^\prime]_N + [k^\prime-j^\prime]_N,
$$
that is, 
$$
k-j+k+j + mN = k^\prime-j^\prime+k^\prime+j^\prime + m^\prime N,  
$$
equivalently, 
$$
2(k-k^\prime) = (m^\prime-m) N,
$$
for some $m,m^\prime\in \mathbb{Z}$. 
Since $N=2n-1$ is an odd number, we conclude that $m^\prime-m$ is even, and hence there exists $\tilde m\in \mathbb{Z}$ such that $k-k^\prime=\tilde m N$. This implies $k=k^\prime$, which is a contradiction. 
\end{proof}

On the other hand, for $N=2n$, $n\in\mathbb{N}$, by the definition of the sets, it is readily to see that each set (except the last one) includes $n$ pairs, and hence, $2n$ numbers are referred in total, and the last one includes $N$ pairs: $\{(j,j)\}_{j=1,\ldots,N}$. We employ the above theorem to prove the following theorem for an even number. 
\begin{theorem}
Let $N=2n$, $n\in \mathbb{N}$. 

\noindent $\mathrm{(i)}$ In each $[k]_{N}^{\text{even}}$, $k=1,\ldots,N-1$, any number in $\{1,\ldots,N\}$ appears once and only once. This implies that different elements in $[k]_{N}^{\text{even}}$ do not share a common number. 

\noindent $\mathrm{(ii)}$ For $k, k^\prime=1,\ldots,N$ and $k\not=k^\prime$, we have $[k]_N^{\text{even}} \cap [k^\prime]_N^{\text{even}} = \emptyset$. 
\end{theorem}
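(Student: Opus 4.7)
The plan is to reduce the even case to the already-established odd case by carefully tracking the effect of the modification $[k]_{N-1}^{\text{odd}} \setminus \{(k,k)\} \cup \{(k,N)\}$ on the content of each set, and then doing a simple case split for the intersection claim.

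For part (i), fix $k\in\{1,\ldots,N-1\}$ and observe that $N-1=2n-1$ is odd, so the odd theorem (i) applies to $[k]_{N-1}^{\text{odd}}$. That statement tells us: every number in $\{1,\ldots,N-1\}\setminus\{k\}$ appears exactly once among the $n$ pairs of $[k]_{N-1}^{\text{odd}}$, while $k$ appears exactly twice (necessarily as the pair $(k,k)$, since the two $k$'s must belong to the same pair by the ``no sharing'' part of the odd theorem). Removing $(k,k)$ and adding $(k,N)$ therefore deletes both copies of $k$ and inserts one copy each of $k$ and $N$. Counting occurrences in $[k]_N^{\text{even}}$ yields: every $j\in\{1,\ldots,N-1\}\setminus\{k\}$ appears once, $k$ appears once, and $N$ appears once — which is exactly the claim.

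For part (ii), split into two sub-cases. If $k,k^\prime\in\{1,\ldots,N-1\}$ with $k\neq k^\prime$, then the modified parts $[k]_{N-1}^{\text{odd}}\setminus\{(k,k)\}$ and $[k^\prime]_{N-1}^{\text{odd}}\setminus\{(k^\prime,k^\prime)\}$ are disjoint by the odd theorem (ii), so their only possible overlap in $[k]_N^{\text{even}}\cap[k^\prime]_N^{\text{even}}$ must come from the new pairs $(k,N)$ and $(k^\prime,N)$; but $(k,N)\neq(k^\prime,N)$ because $k\neq k^\prime$, and neither $(k,N)$ nor $(k^\prime,N)$ lies in the ``old'' part of the other set since every number there is in $\{1,\ldots,N-1\}$. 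If instead $k^\prime=N$, then $[N]_N^{\text{even}}$ consists entirely of diagonal pairs $(j,j)$; we must rule out any diagonal pair appearing in $[k]_N^{\text{even}}$. The new pair $(k,N)$ is off-diagonal since $k\leq N-1$, and for the remaining pairs $([k-j]_{N-1},[k+j]_{N-1})$ with $j\in\{1,\ldots,n-1\}$, a diagonal equality would force $2j\equiv 0\pmod{N-1}$, which is impossible because $N-1=2n-1$ is odd and $0<2j<N-1$. Hence no diagonal pair survives the removal of $(k,k)$, and the intersection is empty.

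I expect no serious obstacle; the only subtlety worth highlighting is the step that upgrades ``$k$ appears twice in $[k]_{N-1}^{\text{odd}}$'' to ``the pair $(k,k)$ is in $[k]_{N-1}^{\text{odd}}$'' (needed so that removing $(k,k)$ actually deletes both copies of $k$). This follows either directly from the explicit $j=0$ term in the definition of $[k]_{N-1}^{\text{odd}}$, or equivalently from the ``no pair shares a common number'' clause of the odd theorem: two distinct pairs each containing $k$ would violate that clause, so both occurrences of $k$ must sit in a single pair, which is then $(k,k)$. With this remark in place, the rest is bookkeeping.
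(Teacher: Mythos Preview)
Your proof is correct and follows essentially the same strategy as the paper: reduce to the already-proven odd case (Theorem~1) by tracking how the replacement $\{(k,k)\}\mapsto\{(k,N)\}$ affects each set, then handle the $k^\prime=N$ case separately. The only minor difference is that for the $k^\prime=N$ case the paper simply invokes part~(i) (each number appears exactly once in $[k]_N^{\text{even}}$ for $k\le N-1$, so no diagonal pair can occur), whereas you re-verify this directly from the definition via the congruence $2j\equiv 0\pmod{N-1}$; both arguments are equally valid and short.
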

\begin{proof}
\noindent $\mathrm{(i)}$ In $[N]_{N}^{\text{even}}$, different elements do not share a common number by the definition. For $k=1,\ldots,N-1$, by Theorem 1(i), any number in $\{1,\ldots,N-1\}\setminus \{k\}$ appears once and only once in $[k]_{N-1}^{\text{odd}}\setminus \{(k,k)\}$. Therefore, we find that any number in $\{1,\ldots,N\}$ appears once and only once in $[k]_{N}^{\text{even}}$ by the definition. 

\noindent $\mathrm{(ii)}$ If $k=N$ (or $k^\prime=N$), then it is trivial because $[k^\prime]_N^{\text{even}}$ (or $[k]_N^{\text{even}}$) has no pairs with the same number for any $k^\prime=1,\ldots,N-1$. For $1\le k^\prime\not=k\le N-1$, by Theorem 1(ii), we have $[k]_{N-1}^{\text{odd}} \cap [k^\prime]_{N-1}^{\text{odd}} = \emptyset$. By the definition of $[k]_N^{\text{even}}$, it is sufficient to check $(k,N)\notin [k^\prime]_{N-1}^{\text{odd}}$ for any $1\le k^\prime \not= k\le N-1$, and $(k,N)\not=(k^\prime,N)$. These are trivial because there are no pairs with number $N$ in $[k^\prime]_{N-1}^{\text{odd}}$, and $k\not=k^\prime$. 
\end{proof}

\subsection{Implementation of distance register}
\label{subsec:appC1}

Recall that $\Delta x_i = L_i/2^n$, $i=1,\ldots,d$. For simplicity, we choose the same values for each dimension: $L_i=L$ and $\Delta x = L/2^n$. 
The distance in one dimension is the difference between two integers, which can be efficiently realized by a quantum comparator, a Fredkin gate, and an in-place modular subtractor. Thus, we only consider the case of $d\ge 2$ in the following contexts. 

\noindent \underline{Electron-nucleus distance}
We assume that the known position vector of the $v$-th nucleus is expressed by
$$
\bm{R}_v = \tilde{\bm{k}}_v \Delta x = \sum_{i=1}^d \Delta x\tilde k_{v,i} \bm{e}_i, 
$$
where $\tilde{\bm{k}}_v = (\tilde k_{v,1}, \ldots, \tilde k_{v,d})^T$, $v=1,\ldots, N_{\text{nuc}}$ is an integer-valued vector. Then, the operation $U_{\text{dis},v}(\bm{R}_v)$ aims at
\begin{align}
\label{app:eq-C1}
U_{\text{dis},v}(\bm{R}_v) \left(\ket{\bm{k}_\ell}_{dn} \otimes \ket{0}_{n_{\text{anc}}}\right) = \ket{r_{\ell}(\bm{R}_v)}_{n_{\text{dis}}} \otimes \ket{\bm{q}_{\ell,v}}_{dn+n_{\text{anc}}-n_{\text{dis}}}, 
\end{align}
for $\ell=1,\ldots, N_{\text{e}}$. Here, $\ket{r_{\ell}(\bm{R}_v)}$ is the electron-nucleus distance register, and $r_{\ell}(\bm{R}_v) := \sum_{i=1}^d (k_{\ell,i}-\tilde k_{v,i})^2$. 
To be precise, this should be called a squared distance register because we have
$$
r_{\ell}(\bm{R}_v) = |\bm{r}_\ell - \bm{R}_v|^2/(\Delta x)^2. 
$$
In some papers, the distance register is defined as $\ket{\sqrt{|\bm{r}_\ell - \bm{R}_v|^2}}$ instead, but such a non-integer register needs extra discussion on the error, which we try to avoid in this paper. A non-integer distance register is compatible with the method APK, and we should discuss it in detail in future work. 
In Eq.~\eqref{app:eq-C1}, $\ket{\bm{q}_{\ell,v}}$ is a quantum state depending on $\bm{k}_\ell$ and $\tilde{\bm{k}}_v$, and $n_{\text{anc}}$ is the number of ancillary qubits to achieve such an operation. Moreover, it is enough to take $n_{\text{dis}} = 2n +\lceil\log_2 d\rceil$ according to the definition of $r_{\ell}(\bm{R}_v)$. 

\noindent \underline{Electron-electron distance}
An electron-electron distance register $\ket{r_{\ell,\ell^\prime}}$ is similarly defined using $r_{\ell,\ell^\prime} := \sum_{i=1}^d (k_{\ell,i}-k_{\ell^\prime,i})^2$, and $U_{\text{dis}}$ is an operation aiming at 
\begin{align*}
U_{\text{dis}} \left(\ket{\bm{k}_{\ell^\prime}}_{dn} \otimes \ket{\bm{k}_{\ell}}_{dn} \otimes \ket{0}_{n_{\text{anc}}}\right) = \ket{r_{\ell,\ell^\prime}}_{n_{\text{dis}}} \otimes \ket{\bm{q}_{\ell,\ell^\prime}}_{2dn+n_{\text{anc}}-n_{\text{dis}}}.
\end{align*}

The operations $U_{\text{dis},v}(\bm{R}_v)$ and $U_{\text{dis}}$ can be realized by arithmetic operations, including absolute subtraction, square, and addition, as shown in Fig.~\ref{appC:Fig1}. 
For each $i=1,\ldots,d$, the absolute subtractions $U_{\text{sub},v}^{(i)}$ and $U_{\text{sub}}$ can be constructed by the circuits in Fig.~\ref{appC:Fig2}, and the square operation $U_{\text{sqr}}$ is given in Fig.~\ref{appC:Fig3}. 
\begin{figure}
\centering
\resizebox{15cm}{!}{
\includegraphics[keepaspectratio]{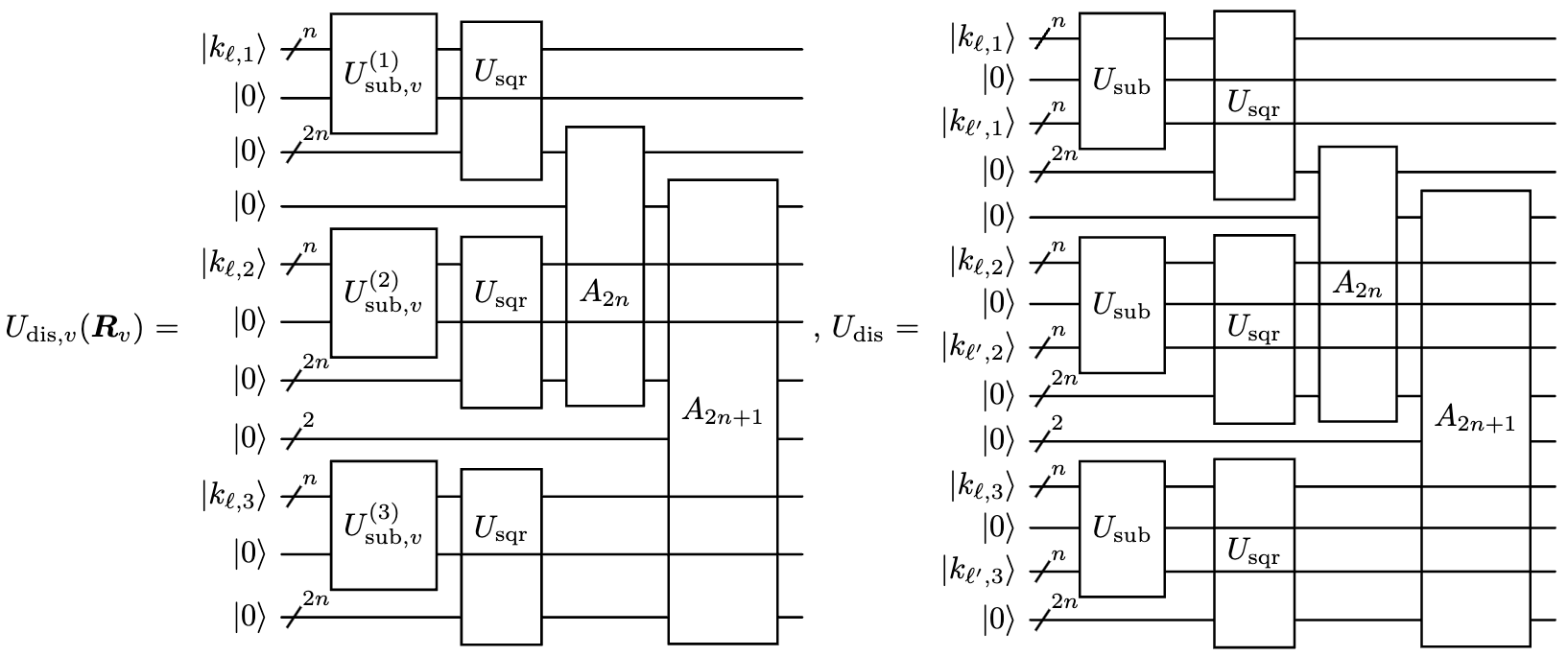}
}
\caption{Implementations of $U_{\text{dis},v}(\bm{R}_v)$ and $U_{\text{dis}}$ by using arithmetic operations for $d=3$. Here, $U_{\text{sub},v}^{(i)}$ and $U_{\text{sub}}$ are quantum absolute value subtractors, $U_{\text{sqr}}$ is a quantum circuit of squaring, and $A_k$ denotes a quantum adder of two $k$-qubit registers with a carry qubit in the center.}  
\label{appC:Fig1}
\end{figure}
\begin{figure}
\centering
\resizebox{15cm}{!}{
\includegraphics[keepaspectratio]{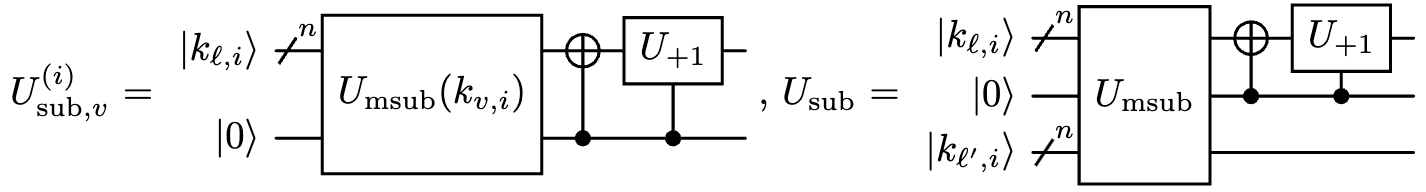}
}
\caption{Implementations of $U_{\text{sub},v}^{(i)}$ and $U_{\text{sub}}$ by modular subtractors $U_{\text{msub}}(\tilde k_{v,i})$, $U_{\text{msub}}$, CNOT gates, and a controlled increment gate. Here, the CNOT gate denotes $n$ CNOT gates controlled by the ancillary qubit and targeted at the $n$ qubits in the first register. } 
\label{appC:Fig2}
\end{figure}
\begin{figure}
\centering
\resizebox{10cm}{!}{
\includegraphics[keepaspectratio]{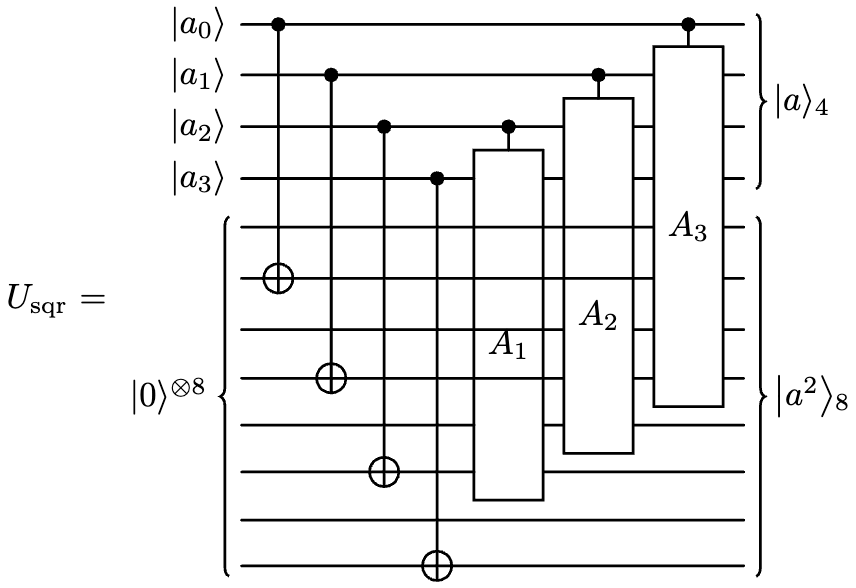}
}
\caption{Implementation of $U_{\text{sqr}}$ based on schoolbook method in an example of $n=4$. Here, $A_k$ denotes the same quantum adder as used in Fig.~\ref{appC:Fig1}.} 
\label{appC:Fig3}
\end{figure}
In terms of the implementation in Fig.~\ref{appC:Fig1}, we have $n_{\text{anc}} = d+2nd+d(d-1)/2 = 2dn + d(d+1)/2$. 
Moreover, if we employ the (controlled) modular adder/subtractor in \cite{Li.2020}, the modular subtractor with a given integer in \cite{Yuan.2023}, and the controlled increment gate in \cite{Yuan.2023}, then the gate count is $\mathcal{O}(dn^2)$ with depth $\mathcal{O}(n^2+dn)$. 
In the case of $d=2$, instead of Fig.~\ref{appC:Fig1}, one can alternatively employ a subtractor with a reversible squaring and sum-of-squares unit in \cite{Nagamani.2018} using more ancillary qubits of order $\mathcal{O}(n^2)$. 
 
On the other hand, as we try to minimize the number of ancillary qubits, we provide alternative implementations of operations $U_{\text{dis},v}(\bm{R}_v)$ and $U_{\text{dis}}$ using QFTs and controlled polynomial phase gates. The previous papers \cite{Yuan.2023, Draper2000, Ruiz-Perez.2017, Sahin2020} applied QFTs and phase gates to construct the quantum adder, multiplier, etc. 
Here, we use a similar idea and propose circuits deriving a second-degree polynomial in the register, as shown in Fig.~\ref{appC:Fig4}. 
\begin{figure}
\centering
\resizebox{15cm}{!}{
\includegraphics[keepaspectratio]{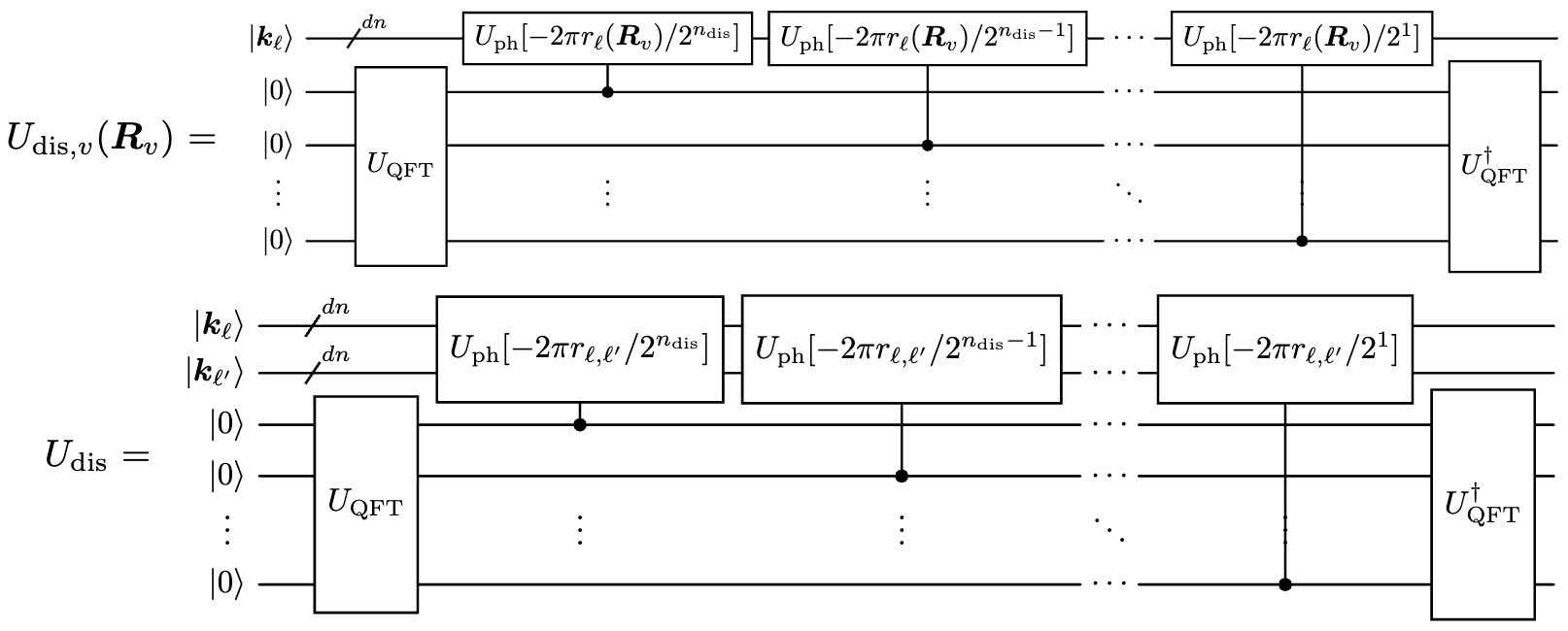}
}
\caption{Alternative implementations of $U_{\text{dis},v}(\bm{R}_v)$ and $U_{\text{dis}}$ using QFTs and controlled polynomial phase gates. Here, $U_{\text{ph}}$ denotes the polynomial phase gate defined in Appendix \ref{subsec:appA3}. The number of zero-initialized ancillary qubits is $2n+\lceil\log_2 d\rceil$. } 
\label{appC:Fig4}
\end{figure}
These circuits use only $n_{\text{anc}}=n_{\text{dis}}=2n+\lceil\log_2 d\rceil$ ancillary qubits, which is smaller than those in Fig.~\ref{appC:Fig1}.  
We verify the above circuit for $U_{\text{dis},v}(\bm{R}_v)$ by the definition of polynomial phase gate. For simplicity, we denote $n_{\text{dis}}$ by $m$ and have 
\begin{align*}
\ket{0}^{\otimes m} \otimes \ket{\bm{k}_\ell}_{dn} &\xrightarrow{U_{\text{QFT}}} \frac{1}{\sqrt{2^{m}}} \sum_{q_0,\ldots,q_{m-1}=0}^1 \ket{q_{m-1}} \otimes \cdots \otimes \ket{q_0} \otimes \ket{\bm{k}_\ell}_{dn} \\
&\xrightarrow{\text{C}U_{\text{ph}}} \frac{1}{\sqrt{2^{m}}} \sum_{q_0,\ldots,q_{m-1}=0}^1 \mathrm{exp}\left(\mathrm{i}2\pi q_0 \sum_{i=1}^d (k_{\ell,i}-\tilde k_{v,i})^2/2^m\right) \ket{q_{m-1}} \otimes \cdots \otimes \ket{q_0} \otimes \ket{\bm{k}_\ell}_{dn} \\
&\xrightarrow{\text{C}U_{\text{ph}}} \frac{1}{\sqrt{2^{m}}} \sum_{q_0,\ldots,q_{m-1}=0}^1 \mathrm{exp}\left(\mathrm{i}2\pi (q_0+2q_1) \sum_{i=1}^d (k_{\ell,i}-\tilde k_{v,i})^2/2^m\right) \ket{q_{m-1}} \otimes \cdots \otimes \ket{q_0} \otimes \ket{\bm{k}_\ell}_{dn} \\
&\xrightarrow{\text{C}U_{\text{ph}}} \cdots \xrightarrow{\text{C}U_{\text{ph}}} \frac{1}{\sqrt{2^{m}}} \sum_{q=0}^{2^m-1} \mathrm{exp}\left(\mathrm{i}2\pi q \sum_{i=1}^d (k_{\ell,i}-\tilde k_{v,i})^2/2^m\right) \ket{q}_m \otimes \ket{\bm{k}_\ell}_{dn} \\
&\xrightarrow{U_{\text{QFT}}^\dag} \frac{1}{2^{m}} \sum_{q,q^\prime=0}^{2^m-1} \mathrm{exp}\left(\mathrm{i}2\pi q \left(\sum_{i=1}^d (k_{\ell,i}-\tilde k_{v,i})^2-q^\prime\right)/2^m\right) \ket{q^\prime}_m \otimes \ket{\bm{k}_\ell}_{dn} \\
&= \ket{\sum_{i=1}^d (k_{\ell,i}-\tilde k_{v,i})^2}_m \otimes \ket{\bm{k}_\ell}_{dn}.
\end{align*}
The last equation follows from the identity $\sum_{q=0}^{2^m-1} \mathrm{exp}\left(\mathrm{i}2\pi q \tilde q/2^m\right)= 2^m \delta_{\tilde q,0}$ provided that $\tilde q$ is an integer in $[-2^m+1,2^m-1]$. The verification for $U_{\text{dis}}$ is similar. 

Now, we discuss the detailed gate count and the circuit depth of Fig.~\ref{appC:Fig4}. We note that the polynomial phase gate $U_{\text{ph}}[-2\pi r_{\ell}(\bm{R}_v)/2^m] = U_{\text{ph}}\left[-2\pi \sum_{i=1}^d (k_{\ell,i}-\tilde k_{v,i})^2/2^m\right]$ can be implemented by $d$ times phase gates for quadratic functions $-2\pi (x_i-\tilde k_{v,i})^2/2^m$ on $n$ qubits, and $U_{\text{QFT}}$ on a zero register is equivalent to the Hadamard gates. Thus, $U_{\text{dis},v}(\bm{R}_v)$ uses a global phase gate, $4n+2\lceil\log_2 d\rceil$ Hadamard gates, $dn(2n+\lceil\log_2 d\rceil)+(2n+\lceil\log_2 d\rceil)(2n+\lceil\log_2 d\rceil-1)/2$ controlled phase gates, and $dn(n-1)(2n+\lceil\log_2 d\rceil)/2$ $2$-controlled phase gates with depth $1+8(2n+\lceil\log_2 d\rceil)-10+n(6n+2)\max\{d,2n+\lceil\log_2 d\rceil\}$. 
Here, the gate count and the depth of Fig.~\ref{appC:Fig4} are both $\mathcal{O}(n^3)$, one order higher than the circuit in Fig.~\ref{appC:Fig1} due to the limited number of ancillary qubits. As for the operation $U_{\text{dis}}$, if we regard $\ket{k_{\ell^\prime,i}}_n \otimes \ket{k_{\ell,i}}_n$ as a new register $\ket{\hat k_{\ell,\ell^\prime,i}}_{2n}$, then the polynomial phase gate $U_{\text{ph}}[-2\pi r_{\ell,\ell^\prime}/2^m]$ can also be implemented by $d$ times phase gates for quadratic functions on $2n$ qubits. Thus, $U_{\text{dis}}$ uses a global phase gate, $4n+2\lceil\log_2 d\rceil$ Hadamard gates, $2dn(2n+\lceil\log_2 d\rceil)+(2n+\lceil\log_2 d\rceil)(2n+\lceil\log_2 d\rceil-1)/2$ controlled phase gates, and $dn(2n-1)(2n+\lceil\log_2 d\rceil)$ $2$-controlled phase gates with depth $1+8(2n+\lceil\log_2 d\rceil)-10+2n(12n+2)\max\{d,2n+\lceil\log_2 d\rceil\}$. 

\newpage

\end{document}